\newtheorem{lemma}{Lemma}
\newtheorem{theorem}{Theorem}
\def \E{{ \mathbb{E} }} 
\def \P{{ \mathbb{P} }}
\DeclarePairedDelimiter\floor{\lfloor}{\rfloor}
\newcommand{\ml}[1]{\log \Big({#1} \Big)}
\newcommand{\mll}[1]{\log^2 \Big({#1} \Big)}
\newcommand{\mlll}[1]{\log^3 \Big({#1} \Big)}
\title{Secretary Matching With Vertex Arrivals and No Rejections}
\author{
    Mohak Goyal%\thanks{mohakg@stanford.edu}
}
\begin{document}

\maketitle

\begin{abstract}
Most prior work on online matching problems %(with the notable exception of \cite{huzhang2017online})
has been with the flexibility of keeping some vertices unmatched. We study three related online matching problems with the constraint of matching every vertex, i.e., \textit{with no rejections.} We adopt a model in which vertices arrive in a uniformly random order and the non-negative edge-weights are arbitrary. %For the online bipartite matching problem with no rejections, we given an analysis of an algorithm given by \cite{huzhang2017online} to improve the approximation factor from $6.18$ to $5.46.$ 
For the capacitated online bipartite matching problem, in which the vertices of one side of the graph are offline and those of the other side arrive online, we give a $4.62$-competitive algorithm when the capacity of each offline vertex is $2$. For the online general (non-bipartite) matching problem, where all vertices arrive online, we give a $3.34$-competitive algorithm. We also study the online roommate matching problem \cite{huzhang2017online}, in which each room (offline vertex) holds $2$ persons (online vertices). Persons derive non-negative additive utilities from their room as well as roommate. %The total utility of a person is the sum of the utilities from room and from roommates. 
In this model, with the goal of maximizing the social welfare, we give a $7.96$-competitive algorithm. This is an improvement over the $24.72$ approximation factor in \cite{huzhang2017online}.
% Additionally, we discuss upper bounds in the online roommate allocation problem.
\end{abstract}
\section{Introduction}
%We study a version of the online matching problem with vertex arrivals. 
 Online allocation problems study scenarios in which the input information is presented in steps, and the algorithm must make irreversible decisions at each step. These problems have applications in various areas such as for ride-hailing platforms \cite{dickerson2018allocation} and ad auctions \cite{mehta2012online}. In each of these applications, the future inputs are unknown and the goal is typically to maximize the revenue over the entire time horizon. Mathematically, several online allocation problems can be modeled as the problem of finding maximum-weight matchings on graphs.% with the goal of choosing edges with maximum sum of weights while satisfying matching constraints. 
 
 One of the most commonly studied models of online matching is with \emph{vertex arrivals}. In this model, one vertex arrives in each time step and it reveals weights of edges from itself to the previously arrived and offline vertices. The algorithm is required to decide whether the current vertex should be matched and if yes, to which other vertex. This model is adopted in several seminal papers in online matching theory, for example~\cite{karp1990optimal,kesselheim2013optimal,gamlath2019online}.  Another well-studied model is one in which all the vertices are known at the outset and edges are revealed in an online fashion. The decision to include an edge in the matching has to be made when it is seen, before the next edge is observed. This is referred to as the edge arrival model \cite{korula2009algorithms}. In this paper, we study problems in the vertex arrival model. Further, we assume that all edge-weights are non-negative.

The models of online matching problems also vary in the form of stochasticity in the edge weights and in the arrival order of vertices. It is easy to see that if both the edge weights and the arrival order of vertices are arbitrary then any online algorithm cannot be competitive against its offline counterpart. Therefore, the commonly studied models are of one of two forms:  \emph{secretary matching} \cite{dynkin1963optimum,gnedin1994solution} and prophet inequalities \cite{krengel1977semiamarts}. In the secretary matching model, vertices arrive in a uniformly random order but the edge-weights are arbitrary. Whereas, in the prophet inequalities setting, the edge weights are drawn independently from a known distribution but the vertex arrival order is arbitrary. In this paper, we consider the secretary matching model.

The secretary matching model of online matching is inspired by the classical secretary problem, in which a known number of job applicants arrive in a uniformly random order, are interviewed upon arrival, and at most one of them is  hired. The irreversible decision to hire or reject an applicant is taken before the next applicant arrives. The goal is to maximize the probability of hiring the best applicant. The optimal algorithm is $e$-competitive. The problem was a folklore before being solved formally by \cite{dynkin1963optimum}. See \cite{ferguson1989solved} for historical details. The lower bound of factor $e$ was given by \cite{gnedin1994solution}. 

Most known algorithms for problems in the secretary matching model use an explore-and-exploit approach. No matches are made in the exploration phase. In the exploitation phase, the current vertex is matched only if it has an edge with weight above a threshold \cite{korula2009algorithms}. Alternatively, as in \cite{ kesselheim2013optimal}, it is matched only if it is in a locally optimal matching computed over the vertices observed so far. An edge can be added to a matching only if it satisfies matching constraints, i.e., none of its end points have been matched previously by the algorithm.

All such algorithms make good use of the option of rejecting (i.e., not matching) several or all of the arriving vertices. This option may not be available in many resource allocation settings. Examples include the roommate market studied in \cite{chan2016assignment, huzhang2017online} and mentor-mentee matching in education programs. In ride-hailing too, rejecting customers is a costly option for platforms and may be chosen only in the worst situations. Therefore, in this paper we study three online matching problems with the constraint of not rejecting any vertex. %We assume that the underlying graph over all vertices is complete, i.e., any match is acceptable, possibly with zero utility.

%We study three related problems with the \emph{no rejection} condition.
The first problem is \emph{capacitated bipartite matching}% in Section~\ref{sec:bm}
, in which vertices on one side of the graph are offline and have a fixed capacity and those on the other side arrive online and have capacity 1. We adopt an explore-and-exploit strategy which is given in detail in Section~\ref{sec:bm}.  % Instead of rejecting vertices in the exploration phase, we match them to randomly chosen offline vertices. In the exploitation phase, we find a match for the current vertex via a locally optimal matching, and if that edge doesn't satisfy matching constraints, then match the current vertex randomly. 
The second problem is \emph{general (non-bipartite) matching}  in which all vertices arrive online. A vertex can either be matched on arrival, or can be kept \emph{waiting} to be matched with a vertex that arrives later. It is not allowed to match two waiting vertices. For this problem, our algorithm runs in three phases:  \emph{exploration, selective matching}, and \emph{forced matching}. It is inspired by the two-phase algorithm of \cite{ezra2020secretary} for the general matching problem without the no-rejection condition. Details of our algorithm are given in Section~\ref{sec:gm}. % In the first phase (\emph{exploration}) that lasts till approximately 35\% vertices arrive, all arriving vertices are kept waiting. We call a vertex to be \emph{waiting,} if it arrived in the past but has not been matched yet. In the second phase (\emph{selective matching}), arriving vertices are matched only if their match in a locally optimal matching is a waiting vertex, else are kept waiting. The third phase (\emph{forced matching}) starts when the number of waiting vertices becomes equal to the number of vertices yet to arrive. In this phase, every arriving vertex is matched -- to its match in a locally optimal matching if it is a waiting vertex, or to a randomly chosen waiting vertex.
The third problem is \emph{roommate matching} in Section~\ref{sec:rm}, in which there are $n/2$ offline rooms, each with capacity $2,$ and $n$ persons that arrive online must be assigned a room each. Persons derive utility from their room as well as roommate. %We use the algorithm for capacitated bipartite matching to account for the utility derived from the room and the algorithm for general matching to account for the utility derived from the roommate.
 
\subsection{Related Work} 

Some extensions of the secretary problem include allowing multiple choices \cite{freeman1983secretary, preater1994multiple, gilbert2006recognizing, kleinberg2005multiple}, hiring a senior and a junior secretary \cite{preater1993senior}, and hiring a team of $k$ persons with submodular valuations over teams \cite{bateni2013submodular}. The secretary problem was generalized to matroids by \cite{babaioff2007matroids}. It is an open problem to find a constant factor approximation algorithm on general matroids. The current best is a $O(\log \log rank)$-competitive algorithm by \cite{lachish2014log}. %and \cite{feldman2014simple}. %On laminar matroids, %the first constant factor algorithm was given by \cite{im2011secretary}, which was later improved to $14.12$ by \cite{jaillet2013advances}
% the best known factor is $9.60,$ given by \cite{ma2016simulated}. 
Other works on the matroid secretary problem include \cite{ soto2013matroid,im2011secretary, gharan2013variants,% ma2016simulated,
soto2021strong}.

 Secretary matching on bipartite graphs was introduced by \cite{korula2009algorithms}. They gave an $8$-competitive %analysis of an
  algorithm. % which was originally given for the secretary problem on  transversal matroids by \cite{dimitrov2012competitive}. 
  The problem was resolved by \cite{kesselheim2013optimal} who gave an optimal $e$-competitive algorithm. More recently, \cite{reiffenhauser2019optimal} gave a truthful mechanism for secretary matching on bipartite graphs that attains the same competitive factor of $e$. Secretary matching on general graphs was solved recently by \cite{ezra2020secretary}. They gave an optimal $2.40$-competitive algorithm, which is, surprisingly, even better than the best-possible on bipartite graphs.

Another related line of work is on resource sharing. The offline roommate market was studied in \cite{chan2016assignment}. For a 2-persons-per-room model, they showed that maximizing social welfare is NP-hard and gave constant factor approximation algorithms for it. The online roommate market in the secretary setting was studied by \cite{huzhang2017online} and they too gave constant factor approximation algorithms for social welfare maximization. % in a $c$-persons-per-room model for constant $c.$  %\cite{elkind2020keeping} study the problem of assigning plots of land to perspective buyers who prefer staying close to their friends. 
\cite{bei2018algorithms} gave algorithms for assignment in ride-sharing. %\cite{li2019room} study room allocation with capacity diversity and budget constraints.
 \cite{li2020fair} consider fairness considerations in resource sharing in general and for dorm assignment in particular.
\subsection{Our Contributions}
We have the following main results for secretary matching problems with no rejection:

\begin{enumerate}
\item For online capacitated bipartite matching, we give a $4.62$-competitive algorithm when offline vertices have capacity $2$ and a $5.46$-competitive analysis of the algorithm of \cite{huzhang2017online} when the offline vertices have capacity $1.$ They gave a $6.18$-competitive factor analysis. 
\item  We give a $3.34$-competitive algorithm for online general (non-bipartite) matching.
\item For online roommate matching, we give  a $7.96$-competitive algorithm. This result is an improvement over the $24.72$ factor given by \cite{huzhang2017online}.
\end{enumerate}
All our algorithms run in polynomial time. The competitive analysis results hold in expectation, which is taken over the randomness in the arrival order and in the algorithm.
\section{Model and Preliminaries}
%We consider three related secretary matching problems: capacitated bipartite matching, general graph matching, and roommate matching. In all three problems we have the requirement of matching \emph{all} vertices. 
An online algorithm is said to be $c$-competitive if its output has expected weight (or utility) at least $\frac{ \textsc{OPT}}{c},$ where $\textsc{OPT}$ is the weight (or utility) of the optimal offline solution. Denote the weights of matching $M$ and edge $e$ by $w(M)$ and $w(e)$ respectively. The three problems we study in this paper are described separately in the following subsections.

\subsection{No-Rejection Capacitated Bipartite Matching} 
This problem is defined over bipartite graph $G = (L, R, W),$ where $L$ is the set of offline vertices, $R$ is the set of online vertices, and $W$ is the set of edge weights. The offline vertices are known from the outset and the online vertices arrive in a uniformly random order. Each offline vertex has a fixed known capacity, which is the number of online vertices it can match with. Denote the capacity of vertex $u \in L$ by $c(u).$ We consider the case where $\sum_{u \in L} c(u) = n,$ where $n$ is the number of online vertices. We assume that $G$ is a complete bipartite graph and that all edge weights are non-negative.

%The algorithm is required to match an arriving vertex before the next online vertex arrives. Every decision is irreversible. 
The objective is to maximize the weight of the matching. % The algorithm is required to match $all$ vertices of the graph. This requirement differentiates our problem from the well-studied online secretary matching problem \cite{korula2009algorithms,kesselheim2013optimal}.
In this paper we study two important special cases of the capacities of offline vertices. The first case is $c(u) = 1 $ for all $ u \in L$ and the second case is $c(u) = 2 $ for all $ u \in L.$ 
We refer to the former as \textsc{BipartiteMatching1} and to the latter as \textsc{BipartiteMatching2}.

\subsection{No-Rejection General Matching} \label{subsec:gm}
In this problem, all $n$ vertices $v \in V$ of graph $G = (V,W)$ arrive online in a uniformly random order and $n$ is even\footnote{This is required due to the no-rejection condition. However, our algorithm and its competitive ratio analysis work also if $n$ is odd and any one vertex is allowed to be kept unmatched.}. We consider a complete graph with non-negative edge-weights given by $W.$  %The set of waiting vertices is denoted by $A$ in \textsc{Alg3}. 
The algorithm can match the current vertex to a waiting vertex, or keep it waiting. However, it is required to match all vertices by the end of the process. %The algorithm cannot match two waiting vertices. 
The objective is to maximize the weight of the constructed matching. We refer to this problem as \textsc{GeneralMatching}. %The closest work to this model is \cite{ezra2020secretary}, where they consider online secretary matching on general graphs with vertex arrivals \emph{without} the constraint of matching \emph{all} vertices. They gave a $2.40$-competitive algorithm and show that it is optimal. Our algorithm and its analysis are inspired by their work.

\subsection{No-Rejection Roommate Matching}  \label{subsec:rm}
This problem was defined in~\cite{huzhang2017online} who studied it as an online version of the offline roommate market model proposed by~\cite{chan2016assignment}. In this problem, there are $m$ rooms modelled as offline vertices and each room has $2$ beds. $n = 2m$ people arrive online in a uniformly random order and each person must be assigned a room upon arrival. Each room is assigned to $2$ persons.

Upon arrival, each person reveals her \emph{room valuation} for each of the $m$ rooms and \emph{mutual utility} of being roommates with each of the persons who arrived before her. The mutual utility captures the sum of happiness of both persons sharing a room. All room valuations and the mutual utility for every pair of persons are non-negative. Define a \emph{room allocation} as a set of $m$ tuples $(r^i, v^i_1, v^i_2)$ where $r^i$ denotes a room and $v^i_1, v^i_2$ denote the persons assigned to room $r^i$.  The utility of tuple $(r^i, v^i_1, v^i_2)$ is the sum of $2$ room valuations for room $r^i$ made by $v^i_1$ and $v^i_2$ and the mutual utility of persons $v^i_1$ and $v^i_2.$ The total utility of a room allocation is the sum of utilities of all its $m$ constituent tuples. The objective is to maximize the utility of the room allocation. We refer to this problem as \textsc{RoommateMatching}.% and give Algorithm~\ref{alg:roommate} in Section~\ref{sec:rm}  which is $7.96$-competitive in expectation.

In the following sections, we give algorithms and technical results for the three problems described in this section.
\section{Online Capacitated Bipartite Matching} \label{sec:bm}
In this section we give online algorithms, \textsc{Alg1} and \textsc{Alg2}, for problems \textsc{BipartiteMatching1} and \textsc{BipartiteMatching2} respectively. Both the algorithms run in two-phases and employ a explore-exploit strategy.

For ease of notation, we number the online vertices from $1$ to $n$ in the order that they arrive. We use the integer variable $v$ both as the number of a step and the name of the current vertex. Define $k$ to be the stopping point of the exploration phase in both \textsc{Alg1} and \textsc{Alg2}. %Vertex $v = k$ is the last vertex of the exploration phase. 
For \textsc{Alg1}, we set $k = \floor{0.21n}$ and for \textsc{Alg2} we set $k = \floor{0.25n}.$ We give the technical results in the following subsections. 

\subsection{\textsc{BipartiteMatching1}}
\textsc{Alg1} for \textsc{BipartiteMatching1} is the same as in \cite{huzhang2017online} [Section 3.1], but with a different stopping point for the exploration phase, in which arriving online vertices are matched to uniformly randomly chosen unmatched offline vertices. When vertex $v$ arrives in the exploitation phase, the algorithm computes an optimal matching $M^v$ over all the offline vertices and online vertices from $1$ to $v$. Denote the edge incident on $v$ in $M^v$ by $e^v$ and the neighbor of $v$ in $M^v$ by $l(e^v).$ If $l(e^v)$ is available, then $v$ is matched to it, otherwise $v$ is matched to a uniformly randomly chosen available offline vertex. % The following two lemmas establish the expected weight and probability of availability of egde $e^v$ in step $v$. 

\begin{algorithm}[tb]
\caption{\textsc{Alg1} for \textsc{BipartiteMatching1}}
\label{alg:algorithm1}
%\textbf{Input}: $n,L$\\
%\textbf{Output}: $M:$ A bipartite matching on $\mathcal{G}$\\
\begin{algorithmic}[1] %[1] enables line numbers
%\STATE counter $\leftarrow 0$
\STATE $R' \leftarrow \emptyset$  \hfill \small{$\triangleright$ \textit{Set of online vertices seen so far}}
\STATE $M \leftarrow \emptyset$ \hfill \small{$\triangleright$ \textit{Matching}}
\FOR{every arriving vertex $v$}
\STATE $R' \leftarrow R' \cup \{v\}$
%\STATE counter $\leftarrow$ counter $+1$
%\IF {counter $> \floor{0.213n}$}
\IF [\hfill \small{$\triangleright$ \textit{Exploration phase}}] {$v < \floor{0.21n}$}
\STATE Uniformly randomly pick an available $v' \in L$
\STATE $M \leftarrow M \cup \{(v',v)\}$
\ELSE [\hfill \small{$\triangleright$ \textit{Exploitation phase}}]
\STATE $M^v \leftarrow$ Optimal bipartite matching on $G(L,R')$
\STATE $e^v \leftarrow$ Edge incident on $v$ in $M^v$
\IF {$M \cup \{e^v\}$ is a matching}
\STATE $M \leftarrow M \cup \{e^v\}$
\ELSE 
\STATE Uniformly randomly pick an available $v' \in L$
\STATE $M \leftarrow M \cup \{(v',v)\}$
\ENDIF
\ENDIF
\ENDFOR
\STATE \textbf{return} $M$
\end{algorithmic}
\end{algorithm}
\begin{lemma}  \label{lem:exp-wt-bm1}
For \textsc{BipartiteMatching1}, let \textsc{OPT} be the offline optimum value. The expected weight of edge $e^v$ computed in line $10$ of \textsc{Alg1} is at least $ \frac{\textsc{OPT}}{n}.$
\end{lemma}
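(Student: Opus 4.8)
The plan is to adapt the two-level symmetry argument of \cite{kesselheim2013optimal}: I would first condition on the unordered \emph{set} $R'$ of online vertices that have arrived by step $v$, evaluate $\E[w(e^v)\mid R']$ using the fact that the current vertex is a uniformly random element of $R'$, and then lower-bound $\E[w(M^v)]$ by comparing $M^v$ against a random restriction of the global optimum.

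For the first step, fix an exploitation step $v$ and condition on $R'$ (so $|R'|=v$). Because the arrival order is a uniformly random permutation, $R'$ is a uniformly random $v$-subset of $R$ and, conditioned on $R'$, the current vertex is uniformly distributed over the $v$ elements of $R'$. Since $G$ is complete bipartite with nonnegative weights and $|L|=n\ge v$, the optimal matching $M^v$ on $G(L,R')$ saturates every vertex of $R'$; hence, as the current vertex ranges uniformly over $R'$, the weight of its incident $M^v$-edge averages to
\[
\E[\,w(e^v)\mid R'\,]=\frac{1}{v}\sum_{u\in R'} w\big(\text{edge of }M^v\text{ incident on }u\big)=\frac{w(M^v)}{v}.
\]

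For the second step, let $M^*$ be the optimal offline matching on $G$ with $w(M^*)=\textsc{OPT}$. Keeping only those edges of $M^*$ whose online endpoint lies in $R'$ gives a feasible matching on $G(L,R')$, so optimality of $M^v$ yields $w(M^v)\ge w(M^*|_{R'})$. Each online vertex lies in the uniform $v$-subset $R'$ with probability $v/n$, so each edge of $M^*$ survives the restriction with probability $v/n$, and linearity of expectation gives $\E[w(M^*|_{R'})]=\frac{v}{n}\textsc{OPT}$; therefore $\E[w(M^v)]\ge \frac{v}{n}\textsc{OPT}$. Taking expectations over $R'$ and chaining the two bounds,
\[
\E[w(e^v)]=\frac{1}{v}\,\E[w(M^v)]\ge\frac{1}{v}\cdot\frac{v}{n}\,\textsc{OPT}=\frac{\textsc{OPT}}{n}.
\]

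The argument is short, and the two places I would be most careful are exactly the two symmetry facts it rests on: that conditioned on the set $R'$ the current vertex is \emph{uniform} over $R'$ (a standard property of uniform random permutations, and what lets me pass from a single edge to the whole of $w(M^v)$), and that $M^v$ saturates all of $R'$ so that the per-vertex edge weights genuinely sum to $w(M^v)$---both of which use completeness of $G$ and nonnegativity of the weights. Notably, nothing here depends on the exploration cutoff $0.21n$, so the bound holds verbatim at every exploitation step $v$.
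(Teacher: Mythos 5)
Your proposal is correct and follows essentially the same route as the paper's proof, which also conditions on the set $R'$, uses uniformity of the current vertex over $R'$ to get $\E[w(e^v)\mid R'] = w(M^v)/v$, and lower-bounds $\E[w(M^v)]$ by $\frac{v}{n}\textsc{OPT}$ via the restriction of the offline optimum (the paper simply cites Lemma~1 of Kesselheim et al.\ for these steps rather than spelling them out). Your additional care about saturation of $R'$ and the feasibility of the restricted matching fills in details the paper leaves implicit.
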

\begin{proof}
This result follows directly from Lemma 1 of \cite{kesselheim2013optimal}. %We include their argument here for completeness. 
In any step $v$ of the algorithm, the identity and order of the  vertices arrived so far can be modelled via the following random process: first choose a set $R'$ of size $v$ from $R$. Then determine the arrival order of these $v$ vertices by iteratively selecting a vertex at random from $R'$ without replacement. %We use this interpretation to exploit the randomness of the arrival process. 
In step $v,$ the algorithm calculates a optimal matching $M^v$ on $G[L, R']$. Since the current vertex $v$ can be seen as being selected uniformly at random from the set $R'$, the expected weight of the edge $e^v$ in $M^v$ is $w(M^v)/v$. Also, since $R'$ can be seen as being uniformly selected from $R$ with size $v$ we know $\mathbb{E}[w(M^v)] \geq \frac{v}{n} \textsc{OPT}$. Together we have, $\mathbb{E}[w(e^v)] \geq \frac{\textsc{OPT}}{n} .$
\end{proof}
\begin{lemma} \label{lem:prob-bm1}
For \textsc{BipartiteMatching1,} probability that edge $e^v$ computed in line $10$ of \textsc{Alg1} can be added to the matching $M$, i.e., the `if' condition of line $11$ in \textsc{Alg1} is 	`true' is at least $\frac{k}{v-1} \frac{n-v}{n} \left(1+ \frac{k}{n} \ml{ \frac{v}{k}}  + \frac{k^2}{2n^2} \mll{ \frac{v}{k}} -o(1) \right).$
\end{lemma}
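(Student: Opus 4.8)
The plan is to track the availability of the single offline vertex $u = l(e^v)$ as the first $v-1$ online vertices are processed, writing the target probability as a telescoping-friendly product of per-step survival probabilities. Set $q_v = \prod_{s=1}^{v-1}(1 - r_s)$, where $r_s$ is the conditional probability that $u$ gets matched at step $s$ given that it is still free at the start of step $s$; the whole argument then reduces to estimating the $r_s$ and expanding the product. I would first dispose of the exploration steps $s < k$: there every arriving vertex is matched to a uniformly random available offline vertex, and since exactly $s-1$ offline vertices are occupied at the start of step $s$, we have $r_s = \frac{1}{n-s+1}$ exactly, so this part of the product telescopes.

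For an exploitation step $k \le s \le v-1$ I would split the event ``$u$ is matched at step $s$'' into two disjoint contributions: the \emph{intended} match, in which $l(e^s) = u$ (so vertex $s$ is routed to $u$ through the `if' branch), and the \emph{forced} match, in which the intended partner $l(e^s) \neq u$ is already occupied and vertex $s$ is sent to a uniformly random free vertex in the `else' branch. Conditioning on the set $R'$ of the first $s$ arrivals fixes $M^s$, and since vertex $s$ is then uniform over $R'$, the intended-match probability is $a_s := \P[l(e^s) = u] \le \frac{1}{s}$, while the forced match contributes $\phi_s\,\frac{1}{n-s+1}$, where $\phi_s = \P[l(e^s)\text{ occupied}]$ is the probability the `else' branch is taken. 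Thus $r_s = a_s + \phi_s\frac{1}{n-s+1}$. The useful algebraic step is to rewrite, with $b_s := \frac{1}{n-s+1}$,
\[
1 - r_s = (1-a_s)(1-b_s) + b_s\,\P[B_s], \qquad \P[B_s] = 1 - a_s - \phi_s \ge 0,
\]
where $B_s$ is the event that the intended match succeeds to a vertex other than $u$. Dropping the nonnegative terms $b_s\P[B_s]$ already yields the leading bound: the $b_s$-product telescopes over all of $s=1,\dots,v-1$ to $\frac{n-v+1}{n}$, and the $a_s$-product over $s=k,\dots,v-1$ is at least $\prod_{s=k}^{v-1}(1-\frac1s)=\frac{k-1}{v-1}$, giving the stated $\frac{k}{v-1}\cdot\frac{n-v}{n}$ up to the $\pm 1$ rounding.

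The positive logarithmic correction is exactly the contribution of the terms I dropped. Factoring them back in,
\[
q_v \ge \frac{n-v+1}{n}\cdot\frac{k-1}{v-1}\cdot\prod_{s=k}^{v-1}\Big(1 + \epsilon_s\Big), \qquad \epsilon_s = \frac{b_s\,\P[B_s]}{(1-a_s)(1-b_s)}.
\]
Here I would use the self-referential observation that $\P[B_s]$ equals, up to an $O(1/s)$ term, the probability that the intended match at step $s$ succeeds, i.e. that $l(e^s)$ is free at step $s$ --- which is precisely the quantity this lemma estimates at time $s$. Feeding in its leading value $\frac{k}{s}\cdot\frac{n-s}{n}$ gives $\epsilon_s \approx \frac{k}{ns}$, so $\sum_{s=k}^{v-1}\epsilon_s \approx \frac{k}{n}\sum_{s=k}^{v-1}\frac1s \approx \frac{k}{n}\ml{\frac{v}{k}}$, and hence $\prod(1+\epsilon_s) \approx \exp\big(\frac{k}{n}\ml{\frac{v}{k}}\big)$. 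Expanding the exponential to second order produces $1 + \frac{k}{n}\ml{\frac{v}{k}} + \frac{k^2}{2n^2}\mll{\frac{v}{k}}$, with all truncation and approximation errors collected into the $-o(1)$.

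The main obstacle I expect is making the two estimates of the second paragraph rigorous under the correct conditioning: both $a_s \le \frac1s$ and the identification of $\P[B_s]$ with an earlier-step freeness probability must be justified while conditioning on ``$u$ free at step $s$,'' which is correlated with the identity of vertex $s$ and with $M^s$. I would handle this by conditioning throughout on $R'$ and on which vertex arrives last, reducing each per-step statement to a uniform-random choice, and by organizing the argument as an induction on $v$ so that the leading freeness bound at steps $s<v$ is available as the input needed to bound $\P[B_s]$. The second delicate point is bookkeeping in the asymptotic expansion: because the square in $\frac{k^2}{2n^2}\mll{\frac{v}{k}}$ arises from exponentiating the sum rather than from second-order accuracy in any single $\epsilon_s$, I would only need the leading order of the freeness bound as the inductive hypothesis, and I would verify that the first-order correction to $\P[B_s]$ and the $O(1/s)$ slack in $a_s$ each contribute only at the $o(1)$ level after summation.
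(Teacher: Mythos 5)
Your proposal is correct and follows essentially the same route as the paper's proof: the same decomposition of the hazard at each step into an intended-match collision (probability at most $1/s$ by the random arrival order, yielding the $\tfrac{k}{v-1}$ factor) plus a forced random match (yielding $\tfrac{n-v}{n}$), followed by the same bootstrap in which the weak success bound is fed back to reduce the forced-match rate and produce the $1+\tfrac{k}{ns}$ correction factors whose product expands to the stated second-order expression. Your packaging as a single per-step survival product with the identity $1-r_s=(1-a_s)(1-b_s)+b_s\P[B_s]$, and the explicit induction on $v$ to justify the self-referential step, is a slightly cleaner formalization of the same argument.
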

\begin{proof}
Denote the probability that $e^v$ can be added to the matching by $\P(v_{\texttt{success}}).$ We start with a weak lower bound on $\P(v_{\texttt{success}}).$ For $l(e^v)$ to be available at step $v,$ it must neither be same as $l(e^u)$ for any $u < v$ nor must it been picked in a random matching step before. There are at most $v-1$ random matching steps before step $v,$ therefore the probability of  $l(e^v)$ being picked in a random matching is at most $\frac{v-1}{n}.$ We use the randomness of the arrival order to bound the probability of $l(e^v)$ being same as $l(e^u)$ for any $u < v.$ Consider a step $u <v$. Out of the $u$ participating online vertices in $M^u,$ the probability that $l(e^v)$ is matched to vertex $u$ is at most $\frac{1}{u}.$ This is because of the uniformly random order of the $u$ participating vertices. Further, this is independent of the order of the vertices $1, \ldots, u-1.$ Therefore, the event that for some $u' < u,$ $l(e^{u'}) = l(e^v)$  is independent of the event $l(e^{u}) = l(e^v)$. Following inductively from steps $v-1$ to $k+1,$ the probability that $l(e^v)$ was \emph{not} matched to $u$ in matching $M^u$ prior to step $v$ is:
\begin{align}
 \prod_{u = k+1}^{v-1} \P[l(e^u)  \neq  l(e^v)] \geq   \prod_{u = k+1}^{v-1} \bigg(1-\frac{1}{u} \bigg) = \frac{k}{v-1}.  \label{eq:not-in-matching-bm1}
\end{align}
Together with the bound on the probability of being picked in a random matching, this implies, 
\begin{align}
\P(v_{\texttt{success}}) \geq \bigg(1 - \frac{v-1}{n} \bigg)\bigg( \frac{k}{v-1}\bigg) \geq \frac{k(n-v)}{n(v-1)}. \label{eq:weak-bound-bm1}
\end{align}
Now we improve the bound in Equation~\eqref{eq:weak-bound-bm1} to get the result in the Lemma. Equation~\eqref{eq:weak-bound-bm1} implies that in any step $u >k,$ the probability that a random match is done is at most $1 - \frac{k(n-u)}{n(u-1)}.$ This is because the algorithm resorts to a random matching only if $l(e^u)$ is unavailable. If there is a random matching in step $u,$ then the probability that $l(e^v)$ is picked in it is $\frac{1}{n-(u-1)}$ because there are $n-(u-1)$ vertices available. Therefore, the probability that $l(e^v)$ is \emph{not} picked in a random matching is at least:
  \begin{align}
  &\prod_{u = 1}^k \left(1 - \frac{1}{n-(u-1)} \right) \prod_{u=k+1}^{v-1} \left(1 -\frac{1-\frac{k(n-u)}{n(u-1)}}{n-(u-1)} \right), \nonumber \\
  %
 % &\geq \prod_{u = 1}^k \frac{n-u}{n-(u-1)}  \prod_{u=k+1}^{v-1} \left(1 -\frac{1-\frac{k}{u-1} + \frac{k}{n}}{n-(u-1)} \right), \nonumber \\
  &\geq \prod_{u = 1}^k  \frac{n-u}{n-(u-1)}   \prod_{u=k+1}^{v-1} \frac{ n-(u-1) -1+\frac{k(n-u)}{n(u-1)}}{n-(u-1)}, \nonumber \\
  &= \prod_{u = 1}^k \frac{n-u}{n-(u-1)} \prod_{u=k+1}^{v-1} \frac{ (n-u) \left[1 + \frac{k}{n(u-1)}\right]}{n-(u-1)}, \nonumber \\
   &= \prod_{u = 1}^{v-1} \frac{n-u}{n-(u-1)}  \prod_{u=k+1}^{v-1}  \left[1 + \frac{k}{n(u-1)}\right], \nonumber \\
 %  &\geq \frac{n- (v-1)}{n} \prod_{u=k+1}^{v-1}  \left[1 +  \frac{ k}{n(u-1)} \right], \nonumber \\
&\geq \frac{n-v}{n}   \Bigg[1 + \sum_{u=k+1}^{v-1} \frac{ k}{n(u-1)} +  \frac{1}{2} \Big( \sum_{u=k+1}^{v-1} \frac{ k}{n(u-1)}\Big)^2 \nonumber \\
   & - \frac{1}{2}\bigg( \sum_{u=k+1}^{v-1} \frac{ k^2}{n^2(u-1)^2}\bigg)   \Bigg],  \label{eq:prod-to-sum}\\
   &\geq \frac{n-v}{n}   \left(1 + \frac{k }{n} \log \big(\frac{v}{k}\big) +  \frac{k^2}{2n^2}\log^2\big(\frac{v}{k}\big) -o(1) \right). \label{eq:not-randomly-matched-bm1}
  \end{align}
Equation~\eqref{eq:prod-to-sum} takes only the first $3$ terms in the expansion of the product. Equation~\eqref{eq:not-randomly-matched-bm1} follows by considering~\eqref{eq:prod-to-sum} as a Riemann sum with intervals of length $1$ and choosing appropriate lower bound in each interval of the sum. The $-o(1)$ term captures $\sum_{u=k+1}^{v-1} \frac{ -k^2}{2n^2(u-1)^2}$ and also the approximation of $\log(v-1)$ as $\log(v).$ This is possible because $v>k$ and $k$ is a constant fraction of $n.$    Combining Equations~\eqref{eq:not-randomly-matched-bm1} and~\eqref{eq:not-in-matching-bm1}, we get the desired result.
\end{proof}

\begin{theorem} \label{thm:bm1}
\textsc{Alg1} is $5.46$-competitive for \textsc{BipartiteMatching1}.  
\end{theorem}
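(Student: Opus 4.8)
The plan is to combine Lemma~\ref{lem:exp-wt-bm1} and Lemma~\ref{lem:prob-bm1} into a lower bound on the expected weight of the matching produced by \textsc{Alg1}, and then to evaluate the resulting expression at $k=\floor{0.21n}$ in the limit $n\to\infty$. First I would account only for the edges $e^v$ added during the exploitation phase and discard the (non-negative) contributions of the exploration matches and the random fallback matches; this only weakens the bound. Thus
\begin{align}
\E[w(M)] \geq \sum_{v=k+1}^{n} \E\big[w(e^v)\cdot \mathbf{1}[v_{\texttt{success}}]\big]. \nonumber
\end{align}
The key structural point is that the lower bound on $\P(v_{\texttt{success}})$ in Lemma~\ref{lem:prob-bm1} is purely combinatorial: it depends only on the arrival order of the first $v-1$ online vertices and on the coin flips of the random-matching steps, and it holds for \emph{every} fixed choice of the set $R'$ and of the vertex occupying position $v$. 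Since $w(e^v)$ is determined by $R'$ and the identity of $v$ alone, the success event and the edge weight are conditionally independent given $(R',v)$, so $\E[w(e^v)\cdot \mathbf{1}[v_{\texttt{success}}]] \geq \E[w(e^v)]\cdot \P(v_{\texttt{success}})$ and the two lemmas multiply term by term.

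Substituting the bounds yields
\begin{align}
\E[w(M)] \geq \frac{\textsc{OPT}}{n} \sum_{v=k+1}^{n} \frac{k}{v-1}\,\frac{n-v}{n}\left(1 + \frac{k}{n}\ml{\frac{v}{k}} + \frac{k^2}{2n^2}\mll{\frac{v}{k}} - o(1)\right). \nonumber
\end{align}
Next I would pass to the continuous limit. Writing $\kappa = k/n = 0.21$ and $x = v/n$, each summand is a smooth function of $x$ sampled on the grid $\{k/n,\dots,1\}$ of mesh $1/n$, so the sum is a Riemann sum that converges as $n\to\infty$ to $n$ times an integral, while the $o(1)$ terms vanish in the limit. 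This gives
\begin{align}
\E[w(M)] \geq \textsc{OPT}\cdot \kappa \int_{\kappa}^{1} \frac{1-x}{x}\left(1 + \kappa\,\ml{\frac{x}{\kappa}} + \frac{\kappa^2}{2}\,\mll{\frac{x}{\kappa}}\right) dx. \nonumber
\end{align}

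The remaining work is to evaluate this integral. Expanding $\frac{1-x}{x}=\frac1x-1$ and integrating the $\frac{1}{x}\log^j(x/\kappa)$ and $\log^j(x/\kappa)$ pieces separately (by parts, or after the substitution $t=\ml{x/\kappa}$) reduces everything to elementary antiderivatives, so the integral has a closed form in $\kappa$. As a consistency check, keeping only the leading term gives $\kappa\int_\kappa^1\frac{1-x}{x}\,dx = \kappa(\kappa-1-\ml{\kappa})$, whose reciprocal at $\kappa=0.21$ is about $6.18$ — exactly the bound of \cite{huzhang2017online} — so the improvement to $5.46$ comes precisely from the two extra $\log$-power terms supplied by the sharpened estimate of Lemma~\ref{lem:prob-bm1}. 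Plugging in $\kappa=0.21$ should make the whole expression approximately $1/5.46\approx 0.183$, proving the claimed competitive ratio; I would also verify that $\kappa=0.21$ (near-)maximizes this quantity, which explains the choice of stopping point. I expect the main obstacle to be bookkeeping in the integral evaluation — keeping the three $\log$-power terms straight — together with making the conditional-independence claim in the first step fully rigorous, since the success event and the edge weight are not literally independent but only independent given $(R',v)$.
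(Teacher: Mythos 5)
Your proposal is correct and follows essentially the same route as the paper: multiply the bounds of Lemmas~\ref{lem:exp-wt-bm1} and~\ref{lem:prob-bm1}, sum over the exploitation phase, pass to the integral $\kappa\int_\kappa^1\frac{1-x}{x}\bigl(1+\kappa\log(x/\kappa)+\tfrac{\kappa^2}{2}\log^2(x/\kappa)\bigr)dx$, and evaluate at $\kappa=0.21$ to get $\approx 0.1833 > 1/5.46$. Your explicit justification of the term-by-term product via conditional independence given $(R',v)$ is a point the paper leaves implicit, and your leading-term sanity check recovering the $6.18$ bound is consistent with the paper's claim.
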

\begin{proof}
We sum over the expected contributions of edges $e^v$ to the weight of the matching $M$  for all $v \in \{k+1, \ldots,n\}.$
\begin{align}
& \frac{\E(w(M))}{\textsc{OPT}} \geq  \sum_{v = k+1} ^n \frac{ \E(w(e^v))}{\textsc{OPT}}\P(v_{\texttt{success}}), \nonumber 
\end{align}
\begin{align}
&\geq \!\!\sum_{v = k+1}^n \!\frac{ 1}{n}\frac{k(n-v)}{n(v-1)} \Big(1+ \frac{k \log(\frac{v}{k})}{n}  + \frac{k^2\log^2(\frac{v}{k})}{2n^2}  -o(1) \Big), \nonumber\\ 
%
%&= \frac{k }{n^2} \sum_{v = k+1}^n (\frac{n-v}{v-1})  \Big(1+ \frac{k}{n} \log (\frac{v}{k})  + \frac{k^2}{2n^2} \log^2(\frac{v}{k}) \Big), \nonumber \\
%
&\geq \frac{k }{n^2} \int_{k}^n (\frac{n-v}{v-1}) (1+ \frac{k \log(\frac{v}{k})}{n}  + \frac{k^2\log^2(\frac{v}{k})}{2n^2}  -o(1) ) dv, \nonumber \\
&=  \frac{k}{n^2} \Big[ n \ml{\frac{n}{k}}  + \frac{k}{2} \mll{\frac{n}{k}} - (n-k) \label{eq:alg1-opt} \\
& - \frac{k}{n}\Big( n \ml{\frac{n}{k}} -n +k \Big)  + \frac{k^2}{6n^2} \mlll{\frac{n}{k}} \nonumber \\
& -\frac{k^2}{2n^2} \Big( n \mll{\frac{n}{k}} -2n\ml{\frac{n}{k}} +2n-2k \Big) -o(n) \Big]. \nonumber 
\end{align}
The second inequality results from Lemmas~\ref{lem:exp-wt-bm1} and~\ref{lem:prob-bm1}. To get the third inequality, we view the integral $\int_{k}^n$   as a Riemann sum with subdivisions into intervals of length $1$ and use a simple upper bound on the function in each subdivision. The expression in Equation~\eqref{eq:alg1-opt} is maximized at $k = \floor{0.21n}$ and attains the value $0.1833$ for large $n$ which is $> \frac{1}{5.46}$ . 
\end{proof}

\subsection{\textsc{BipartiteMatching2}}
One natural algorithm for \textsc{BipartiteMatching2} is to consider each offline vertex (with capacity $2$) as two distinct vertices, each with capacity $1,$ and run \textsc{Alg1}. However, using the proposed \textsc{Alg2}, we utilize the fact that we have $2$ chances to match each offline vertex, and we do not need to do random matchings unless both chances are used. This is also reflected in the competitive ratio of \textsc{Alg2}, which is at most $4.62$ and is better than what we obtain for \textsc{Alg1}.

 \textsc{Alg2} ensures that no offline vertex is matched twice by random assignments until all offline vertices are matched at least once. \textsc{Alg2} starts with an exploration phase in which only random matches are made. In the exploitation phase, it finds a match for arriving vertex $v$ via an optimal matching $M^v,$ and by a random selection if that match is not feasible.

The technical lemmas on the probability of availability of $l(e^v)$ follow and the competitive ratio is given in Theorem~\ref{thm:bm2}.

\begin{algorithm}[tb]
\caption{\textsc{Alg2} for \textsc{BipartiteMatching2}}
\label{alg:algorithm2}
%\textbf{Input}: $n,L$\\
%\textbf{Output}: $M:$ A bipartite capacitated matching on $\mathcal{G}$\\
\begin{algorithmic}[1] %[1] enables line numbers
%\STATE counter $\leftarrow 0$
\STATE $R' \leftarrow \emptyset$  \hfill \small{$\triangleright$ \textit{Set of online vertices seen so far}}
\STATE $M \leftarrow \emptyset$ \hfill \small{$\triangleright$ \textit{Matching}}
\STATE $L_a \leftarrow L$ \hfill \small{$\triangleright$ \textit{Offline vertices available for random matchings}}
\FOR{every arriving vertex $v$}
\STATE $R' \leftarrow R'\cup \{v\}$
%\STATE counter $\leftarrow$ counter $+1$
%\IF {counter $> \floor{n/4}$}
\IF [\hfill \small{$\triangleright$ \textit{Exploration phase}}]{$ v< \floor{n/4}$}
\STATE Uniformly randomly pick a vertex $v' \in L_a$
\STATE $M \leftarrow M \cup \{(v',v)\}$
\STATE $L_a \leftarrow L_a \setminus \{v'\}$
\ELSE [\hfill \small{$\triangleright$ \textit{Exploitation phase}}]
\STATE $M^v \leftarrow$ Optimal capacitated matching on $G(L,R')$
\STATE $e^v \leftarrow$ Edge incident on $v$ in $M^v$
\STATE $l(e^v) \leftarrow$ Neighbor of $v$ via $e^v$
\IF {$M \cup \{e^v\}$ is a valid capacitated matching}
\STATE $M \leftarrow M \cup \{e^v\}$
\STATE $L_a \leftarrow L_a \setminus \{l(e^v)\}$
\ELSE
\STATE Uniformly randomly pick a vertex $v' \in L_a$
\STATE $M \leftarrow M \cup \{(v',v)\}$
\STATE $L_a \leftarrow L_a \setminus \{v'\}$
\ENDIF
\ENDIF
\IF [\hfill \small{$\triangleright$ \textit{Reset $L_a$}}] {$L_a= \emptyset$}
\STATE $L_a \leftarrow \{v' \in L | v'~\text{is not matched to full capacity}\}$
\ENDIF
\ENDFOR
\STATE \textbf{return} $M$
\end{algorithmic}
\end{algorithm}
\begin{lemma} \label{lem:exp-wt-bm2}
For \textsc{BipartiteMatching2,} let \textsc{OPT} be the offline optimum value. The expected weight of edge $e^v$ computed in line $12$ of \textsc{Alg2} is at least $ \frac{\textsc{OPT}}{n}.$
\end{lemma}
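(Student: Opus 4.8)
The plan is to follow the same two-step argument used in the proof of Lemma~\ref{lem:exp-wt-bm1}, now adapted to the capacity-$2$ setting; in fact the argument is structurally identical, because the capacitated matching polytope behaves just like the ordinary matching polytope under deletion of online vertices. The two ingredients I would establish are: (i) conditioned on the set $R'$ of online vertices seen so far, the current vertex $v$ is uniform among them, so the expected weight of its incident edge in $M^v$ equals $w(M^v)/v$; and (ii) the expected optimal value on the revealed subgraph satisfies $\E[w(M^v)] \geq \frac{v}{n}\textsc{OPT}$. Multiplying these gives $\E[w(e^v)] \geq \textsc{OPT}/n$.

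For ingredient (i), I would model the history exactly as in Lemma~\ref{lem:exp-wt-bm1}: first draw a uniformly random size-$v$ subset $R' \subseteq R$, then reveal its arrival order by sampling without replacement. Because every online vertex has capacity $1$ and the total offline capacity is $n \geq v$, with non-negative weights the optimal capacitated matching $M^v$ on $G[L,R']$ saturates all $v$ online vertices; hence $M^v$ consists of exactly $v$ edges, one incident on each online vertex, and $e^v$ is well defined. Since $v$ is equally likely to be any of the $v$ vertices of $R'$, averaging over which vertex is current gives $\E\big[w(e^v) \mid R'\big] = w(M^v)/v$.

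For ingredient (ii), I would use that $R'$ is a uniformly random size-$v$ subset of $R$. Take the offline optimum capacitated matching $\textsc{OPT}$ and delete every edge whose online endpoint is not in $R'$; the result is a feasible capacitated matching on $G[L,R']$, since deleting online vertices only lowers the load on each offline vertex and cannot violate any capacity bound. Each edge of $\textsc{OPT}$ survives this deletion with probability $v/n$ (the chance that its online endpoint lands in $R'$), so by linearity of expectation the surviving weight is $\frac{v}{n}\textsc{OPT}$. As $M^v$ is optimal on $G[L,R']$, its weight dominates that of this feasible restriction, and taking expectations yields $\E[w(M^v)] \geq \frac{v}{n}\textsc{OPT}$. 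Combining with (i) over the randomness of $R'$ gives $\E[w(e^v)] \geq \frac{1}{v}\cdot\frac{v}{n}\textsc{OPT} = \frac{\textsc{OPT}}{n}$.

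The only point requiring any care---and the closest thing to an obstacle---is verifying that restricting $\textsc{OPT}$ to the online vertices of $R'$ remains a valid capacitated matching and, symmetrically, that $M^v$ saturates all online vertices so that $e^v$ exists; both are immediate from the behavior of capacitated matchings under vertex deletion and from non-negativity of the weights. In particular, nothing about this lemma depends on the random-versus-optimal matching choices that the algorithm makes downstream; it is a purely combinatorial statement about $M^v$, identical in spirit to the capacity-$1$ case, so no new analytical machinery beyond Lemma~1 of~\cite{kesselheim2013optimal} is needed.
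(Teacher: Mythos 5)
Your proof is correct and rests on exactly the two ingredients the paper relies on (via Lemma~1 of Kesselheim et al.): conditional uniformity of the current vertex within $R'$, giving $\E[w(e^v)\mid R'] = w(M^v)/v$, and the subsampling bound $\E[w(M^v)]\ge \frac{v}{n}\textsc{OPT}$. The paper's own proof is a one-line reduction---split each capacity-$2$ offline vertex into two unit-capacity copies and invoke Lemma~\ref{lem:exp-wt-bm1}---whereas you re-verify both steps directly in the capacitated setting; the underlying argument is the same, and your explicit check that restricting \textsc{OPT} to $R'$ stays feasible and that $M^v$ saturates the online side is a harmless (arguably more careful) unpacking of that reduction.
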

\begin{proof}
Notice that the edge $e^v$ is computed same as in \textsc{Alg1}. Make $c$ identical copies of each offline vertex, each with capacity $1.$ The problem reduces to \textsc{BipartiteMatching1} and this result follows from Lemma~\ref{lem:exp-wt-bm1}.
\end{proof}

\begin{lemma} \label{lem:prob1-bm2}
In \textsc{Alg2} for \textsc{BipartiteMatching2,} for vertices index $k+1$ to $\floor{n/2},$ the probability that edge $e^v$ computed in line $12$ can be added to matching $M$, i.e., the `if' condition of line $14$ is `true' is at least $\frac{n-v}{2v} - \frac{n^2}{32v^2} -o(1).$
\end{lemma}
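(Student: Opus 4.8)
The plan is to reformulate the event and then reduce it to the capacity-$1$ analysis of Lemma~\ref{lem:prob-bm1}, paying for the second matching slot with a single second-order correction. The edge $e^v$ can be added exactly when its offline endpoint $l(e^v)$ still has free capacity, i.e.\ when $l(e^v)$ has been matched at most once before step $v$. Writing $X\in\{0,1,2\}$ for the number of free slots of $l(e^v)$ at the start of step $v$, the quantity to lower bound is $\P(v_{\texttt{success}})=\P[X\ge 1]$. The backbone of the argument is the exact identity $\P[X\ge 1]=\E[X]-\P[X=2]$ (valid since $X\le 2$), so it suffices to lower bound the expected number of free slots $\E[X]$ and to upper bound the probability $\P[X=2]$ that $l(e^v)$ is still completely unmatched. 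Two structural facts drive both estimates. First, because $v\le\floor{n/2}$ and each match deletes one vertex from $L_a$, the set $L_a$ has not yet been reset at step $v$; hence $l(e^v)$ can have been chosen in a random matching at most once, and any second match of $l(e^v)$ must come from an $e$-edge. Second, exactly as in Lemma~\ref{lem:prob-bm1}, the random-order coupling gives $\P[l(e^u)=l(e^v)]\le 2/u$ for each $u<v$ (now $2/u$ rather than $1/u$, since in the capacitated optimum $M^u$ the vertex $l(e^v)$ is matched to at most two of the $u$ online vertices), and the $e$-edge events across distinct steps $u$ are independent.

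To bound $\E[X]$ I would split the capacity of $l(e^v)$ into two slots and, whenever $l(e^v)$ receives a match, assign it to one of the two slots by an independent fair coin. By symmetry the two slots are exchangeable, so $\E[X]=2\,\P[\text{slot }1\text{ free}]$. The coin halves every matching probability seen by a fixed slot: slot~$1$ is hit by an $e$-edge at step $u$ with probability at most $\frac12\cdot\frac{2}{u}=\frac1u$, and it is hit by a random match at most once. This is precisely the capacity-$1$ situation, so the computation behind~\eqref{eq:not-in-matching-bm1} and~\eqref{eq:weak-bound-bm1} applies verbatim to a single slot and yields $\P[\text{slot }1\text{ free}]\ge \frac{k}{v-1}\cdot\frac{n-v}{n}$. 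With $k=\floor{n/4}$ this gives $\E[X]\ge \frac{2k(n-v)}{n(v-1)}=\frac{n-v}{2v}-o(1)$, which is the leading term of the claimed bound.

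It remains to control the correction $\P[X=2]=\P[l(e^v)\text{ unmatched}]$, and this is the step I expect to be the main obstacle. An upper bound on an ``unmatched'' probability needs \emph{lower} bounds on the matching activity at $l(e^v)$, whereas the coupling only supplies the upper bound $2/u$; the naive product $\prod_u(1-2/u)$ therefore estimates the quantity in the wrong direction. The way out is to use that $l(e^v)$ is not an arbitrary offline vertex but one the optimum $M^v$ actively wants: $l(e^v)$ is matched in $M^v$ (to $v$, and possibly to a second online vertex), and by the random arrival order that partner is likely to have appeared before step $v$ and to have forced an $e$-match of $l(e^v)$ at its own arrival. Quantifying this ``a wanted vertex was probably already taken'' effect, together with the no-reset bookkeeping of the exploration-phase random matches, is what should produce the bound $\P[X=2]\le \frac12\big(\frac{k}{v}\big)^2=\frac{n^2}{32v^2}$. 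Subtracting this from the lower bound on $\E[X]$ and absorbing the $v\mapsto v-1$ and Riemann-type approximations into $-o(1)$ gives $\P(v_{\texttt{success}})\ge \frac{n-v}{2v}-\frac{n^2}{32v^2}-o(1)$, as claimed.
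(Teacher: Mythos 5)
There is a genuine gap --- in fact both halves of your decomposition fail, even though the identity $\P[X\ge 1]=\E[X]-\P[X=2]$ itself is valid. For the first half, the slot-splitting coupling cannot deliver $\E[X]\ge \frac{n-v}{2v}$. Under any exchangeable assignment of matches to the two slots, $\P[\text{slot 1 occupied}]=\tfrac12\P[\#\text{matches}=1]+\P[\#\text{matches}=2]=\tfrac12\,\E[\#\text{matches}]$, so $\E[X]=2\,\P[\text{slot 1 free}]=2-\E[\#\text{matches}]$: the device is linearity of expectation in disguise. The product computation of~\eqref{eq:not-in-matching-bm1} does not apply ``verbatim'' to a single slot because the events ``slot $1$ is hit at step $u$'' are neither independent nor halved uniformly --- once the other slot is occupied, a further match lands on slot $1$ with conditional probability $1$, not $\tfrac12$. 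What linearity actually gives is $\E[X]\ge 2-\sum_{u=k+1}^{v-1}\frac{2}{u}-\frac{2(v-1)}{n}$, which at $v$ near $\floor{n/2}$ is about $2-2\log 2-1<0$, far below the claimed $\frac{n-v}{2v}=\frac12$ there. For the second half, you correctly identify $\P[X=2]\le\frac{n^2}{32v^2}$ as the obstacle, but the target bound is not just unproven, it is false: the adversary can arrange that $l(e^v)$ appears in no optimal matching $M^u$ for $u<v$ (the matchings $M^u$ use at most $u\le\floor{n/2}$ of the $n$ capacity slots, so an offline vertex wanted only by $v$ need never be touched), and if few random matches occur beyond the $k=\floor{n/4}$ exploration steps, then $l(e^v)$ is completely unmatched with probability about $1-\frac{k}{n/2}=\frac12$ even at $v=\floor{n/2}$, where your claimed bound is $\frac18$. (In such instances $\E[X]$ is correspondingly larger, so the exact identity survives --- but it means the two quantities cannot be bounded separately with the values you assign them.)

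The paper's route is structurally different and is worth contrasting. It lower-bounds the probabilities of two \emph{disjoint sufficient events}: either $l(e^v)$ was never an $e$-target at any earlier step (so at most one random match has consumed it), or it was an $e$-target exactly once \emph{and} had not been randomly matched before that step (so the reset bookkeeping guarantees a free slot). The key quantitative ingredient, absent from your sketch, is a bootstrap: a first weak lower bound on $\P(v_{\texttt{success}})$ is fed back to upper-bound the rate of random matchings per step, which yields a much stronger lower bound on the probability $\P(\text{NR}_v)$ of escaping all random matches, which in turn upgrades the bound on the second event to produce exactly the $\frac{n-v}{2v}-\frac{n^2}{32v^2}$ expression. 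If you want to salvage your $\E[X]$-versus-$\P[X=2]$ framing, you would need to bound the two terms jointly on an instance-by-instance basis rather than by separate worst-case estimates, at which point you are effectively reconstructing the paper's case analysis.
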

%\begin{proof} The proof is similar to that of Lemma~\ref{lem:prob-bm1} and  is in Appendix~\ref{appendix2}.
%\end{proof}

\begin{lemma} \label{lem:prob2-bm2}
In \textsc{Alg2} for \textsc{BipartiteMatching2,} for vertices index $\floor{n/2}+1$ to $n,$ the probability that edge $e^v$ computed in line $12$ can be added to matching $M$, i.e., the `if' condition of line $14$ is `true' is at least $\frac{3n(n-v)}{16v^2} -o(1).$
\end{lemma}
%\begin{proof} The proof is in Appendix~\ref{appendix2}.
%\end{proof}
The proofs of Lemmas~\ref{lem:prob1-bm2} and~\ref{lem:prob2-bm2} use similar ideas as the proof of Lemma~\ref{lem:prob-bm1} and  are given in Appendix~A.%\ref{appendix2}.
\begin{theorem}  \label{thm:bm2}
\textsc{Alg2} is $4.62$-competitive for \textsc{BipartiteMatching2}.  
\end{theorem}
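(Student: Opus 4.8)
The plan is to mirror the analysis of Theorem~\ref{thm:bm1}. Every vertex $v$ arriving in the exploitation phase contributes the edge $e^v$ to the returned matching $M$ exactly when the `if' condition of line $14$ holds, so summing over $v \in \{k+1, \ldots, n\}$ with $k = \floor{n/4}$ gives
\begin{align}
\frac{\E(w(M))}{\textsc{OPT}} \geq \sum_{v=k+1}^n \frac{\E(w(e^v))}{\textsc{OPT}}\, \P(v_{\texttt{success}}). \nonumber
\end{align}
The difference from \textsc{Alg1} is that $\P(v_{\texttt{success}})$ now satisfies two distinct bounds depending on the regime of $v$, so I would split the sum at $\floor{n/2}$ and apply Lemma~\ref{lem:prob1-bm2} on $\{k+1, \ldots, \floor{n/2}\}$ and Lemma~\ref{lem:prob2-bm2} on $\{\floor{n/2}+1, \ldots, n\}$.

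Substituting the uniform lower bound $\E(w(e^v)) \geq \frac{\textsc{OPT}}{n}$ from Lemma~\ref{lem:exp-wt-bm2} together with the two probability bounds, and cancelling $\textsc{OPT}$, yields
\begin{align}
\frac{\E(w(M))}{\textsc{OPT}} \geq \frac{1}{n}\sum_{v=k+1}^{\floor{n/2}}\Big(\frac{n-v}{2v} - \frac{n^2}{32v^2}\Big) + \frac{1}{n}\sum_{v=\floor{n/2}+1}^n \frac{3n(n-v)}{16 v^2} - o(1). \nonumber
\end{align}
As in Theorem~\ref{thm:bm1}, I would treat each sum as a Riemann sum with unit-length subdivisions and bound the summand below on each subinterval, passing to the integrals $\frac{1}{n}\int_{n/4}^{n/2}\big(\frac{n-v}{2v} - \frac{n^2}{32v^2}\big)\,dv$ and $\frac{1}{n}\int_{n/2}^n \frac{3n(n-v)}{16v^2}\,dv$, while absorbing the discretization error, the replacement of $v-1$ by $v$, and the $\frac{n^2}{32v^2}$-type corrections into the $o(1)$ term. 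The rescaling $v = nt$ then renders both integrals independent of $n$, reducing the bound to $\int_{1/4}^{1/2}\big(\frac{1-t}{2t} - \frac{1}{32t^2}\big)\,dt + \int_{1/2}^1 \frac{3(1-t)}{16t^2}\,dt$, each of which is elementary (involving only $\frac{1}{t}$, $\frac{1}{t^2}$, and constants).

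Evaluating the antiderivatives gives $\frac{1}{2}\log 2 - \frac{3}{16}$ for the first integral and $\frac{3}{16}(1-\log 2)$ for the second; the constant terms cancel and the sum collapses to the clean value $\frac{5}{16}\log 2 \approx 0.2166$. Since $\frac{16}{5\log 2} < 4.62$, this establishes the claimed competitive ratio (a $c$-competitive algorithm is also $c'$-competitive for every $c' > c$). The step I expect to be the main obstacle is not the integration but justifying the design choices $k = \floor{n/4}$ and the split at $\floor{n/2}$: one must confirm that this stopping point (near-)maximizes the combined two-regime objective, and that the monotonicity of each summand makes the Riemann-sum inequalities point in the correct direction on both ranges, so that the $o(1)$ bookkeeping is legitimately negligible as $n \to \infty$.
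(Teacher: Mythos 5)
Your proposal is correct and follows essentially the same route as the paper: the same decomposition of the sum at $\floor{n/2}$, the same three lemmas, the same passage to Riemann integrals with the substitution $v = xn$, and the same resulting bound $> 0.2166 > \frac{1}{4.62}$. Your closed-form evaluation $\frac{5}{16}\log 2$ of the combined integrals is a small refinement over the paper's numerical statement, but not a different argument.
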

\begin{proof}
We sum the expected contributions of the edges $e^v$ to the weight of the matching $M$ for all $ v \in \{ k+1, \ldots , n\}. $
From Lemmas~\ref{lem:exp-wt-bm2},~\ref{lem:prob1-bm2}, and~\ref{lem:prob2-bm2}, we get:
\begin{align}
& \frac{\E(w(M))}{\textsc{OPT}} \geq  \frac{1}{\textsc{OPT}}\sum_{v = \floor{\frac{n}{4}}+1} ^n \E(w(e^v))\P(v_{\texttt{success}}) \nonumber \\
&\geq \frac{ 1}{n} \!\! \sum_{v = \floor{\frac{n}{4}}+1}^{n/2}\!\!\!\! \bigg(\frac{n-v}{2v} - \frac{n^2}{32v^2} \bigg)  + \frac{1}{n} \!\sum_{v = n/2}^{n}\!\! \frac{3n(n-v)}{16v^2} -o(1) \nonumber \\
&\geq  \int_{\frac{n}{4}}^{\frac{n}{2}} \bigg(\frac{n-v}{2vn} - \frac{n^2}{32v^2n} \bigg)dv + \int_{\frac{n}{2}}^{n} \frac{3(n-v)}{16v^2}  dv  -o(1) \nonumber \\
\intertext{We do a change of variables by substituting $v$ with $xn$,}
&= \int_{1/4}^{1/2} \frac{1-x}{2x}  -\frac{1}{32x^2} dx  + \int_{1/2}^{1} \frac{3(1-x)}{16x^2}  dx - o(1), \nonumber\\
&> 0.2166> \frac{1}{4.62} .\nonumber
\end{align}
%The technique is similar to that used to prove Theorem~\ref{thm:bm1}. 
The $o(1)$ error term captures the difference between $k = \floor{n/4}$ and $n/4$ in the limit of the integration; this difference goes to $0$ as $n \rightarrow \infty.$
\end{proof}
%\subsection{Upper Bound} Ditch
%Details would be appropriate for the appendix.
%
%
%
%
\section{ Online General Matching}  \label{sec:gm}
For this problem, described in Subsection~\ref{subsec:gm}, we give \textsc{Alg3} which runs in three phases and is inspired by \cite{ezra2020secretary} who have a two-phase algorithm without the no-rejection condition. We number the vertices from $1$ to $n$ in the order of arrival. We use variable $v$ both as the number of an iteration and as the name of the current vertex.

 The first phase is of exploration in which all vertices are kept waiting and added to set $A$. Next is the selective matching phase. When vertex $v$ arrives in this phase, if $v$ is even, the algorithm computes an optimal matching $M^v$ over the set of vertices $V'$ seen so far. If $v$ is odd, the algorithm uniformly randomly chooses a vertex $v_r$ from $V' \setminus {v}$ , and computes the optimal matching $M^v$ over $V' \setminus {v_r}$. This ensures that $v$ has a match in $M^v$. If the match of $v$ in $M^v$ (denoted by $l(e^v)$) is a waiting vertex (i.e., it is in $A$), then they are matched in $M$, otherwise $v$ is kept waiting and added to $A$. The third phase is called forced matching. In this phase, the algorithm computes an optimal match $M^v$ just as in the second phase. If $l(e^v)$ is in $A,$ then $v$ is matched to it, otherwise $v$ is matched to a randomly chosen waiting vertex $v' \in A$. 

 Define $k_e$ and $k_s$ as the stopping points of the first and second phases respectively in \textsc{Alg3}. We set $k_e = \floor{\frac{6n}{17}}$ and $k_s$ is not fixed. The second phase ends when the number of waiting vertices, i.e., $|A|$ equals the number of vertices yet to arrive, i.e., $n+1-v$. In Lemma~\ref{lem:conc-eq} we show  that $k_s$ is not much smaller than its expected value $ \frac{12n}{17}$ w.h.p. for large $n.$
\begin{algorithm}[tb]
\caption{\textsc{Alg3} for \textsc{GeneralMatching}}
\label{alg:generalmatching}
%\textbf{Input}: $n$\\
%\textbf{Output}: $M:$ A matching on $\mathcal{G}$\\
\begin{algorithmic}[1] %[1] enables line numbers
%\STATE counter $\leftarrow 0$
\STATE $V' \leftarrow \{1,2, \ldots \floor{6n/17}\}$ \hfill \small{$\triangleright$ \textit{Set of observed vertices}}
\STATE $A \leftarrow \{1,2, \ldots \floor{6n/17}\}$ \hfill \small{$\triangleright$ \textit{Set of waiting vertices}}
\STATE $M \leftarrow \emptyset$ \hfill \small{$\triangleright$ \textit{Matching}}
\FOR{every arriving vertex $v > \floor{6n/17}$}
\STATE $V' \leftarrow V'\cup \{v\}$
%\STATE counter $\leftarrow$ counter $+1$
%\IF {counter $> \floor{6n/17}$}
%\IF { $(v)> \floor{6n/17}$}
%\IF {counter is even}
\IF {$v$ is even}
\STATE $\tilde{V} = V'$
\ELSE
\STATE $v_r \leftarrow$ randomly chosen from $V'\setminus \{v\}$
\STATE $\tilde{V} = V' \setminus \{v_r\}$
\ENDIF
\STATE $M^v \leftarrow$ Optimal matching on $\tilde{V}$
\STATE $e^v \leftarrow$ Edge incident on $v$ in $M^v$
\STATE $l(e^v) \leftarrow$ Neighbor of $v$ via $e^v$
\IF {$M \cup \{e^v\}$ is a valid matching}
\STATE $M \leftarrow M \cup \{e^v\}$
\STATE $A \leftarrow A \setminus \{l(e^v)\}$ \hfill \small{$\triangleright$ \textit{$l(e^v)$ is no longer waiting}}
\ELSE
%\IF{$|A| = (n+1-$counter)}
\IF [\hfill \small{$\triangleright$ \textit{Keep $v$ waiting}}]{$|A| < (n+1-v)$ } 
\STATE $A \leftarrow A \cup \{v\}$
\ELSE [\hfill \small{$\triangleright$ \textit{Cannot keep any more vertices waiting}}]
\STATE Uniformly randomly pick a vertex $v' \in A$
\STATE $M \leftarrow M \cup \{(v',v)\}$ \hfill \small{$\triangleright$ \textit{Forced matching}}
\STATE $A \leftarrow A \setminus \{v'\}$ 
\ENDIF
\ENDIF
%\ELSE
%\STATE $A \leftarrow A \cup v$
%\ENDIF
\ENDFOR
\STATE \textbf{return} $M$
\end{algorithmic}
\end{algorithm}
\begin{lemma} \label{lem:exp-wt-gm}
 For \textsc{GeneralMatching,} let \textsc{OPT} be the offline optimum value. The expected weight of edge $e^v$ computed in line $13$ of \textsc{Alg3} is at least $ \frac{4\floor{v/2}-2}{n(n-1)}\textsc{OPT}.$
\end{lemma}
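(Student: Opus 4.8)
The plan is to condition on the (unordered) set $V'$ of the first $v$ arrivals and exploit that, given $V'$, the current vertex $v$ is distributed uniformly over $V'$. I would bound $\E[w(e^v)\mid V']$ in terms of the weight $\text{OPT}(V')$ of an optimal matching on the induced subgraph, and then remove the conditioning through a final averaging step $\E[\text{OPT}(V')]\geq \frac{v(v-1)}{n(n-1)}\textsc{OPT}$. The two parities of $v$ must be handled separately, because $\tilde V$ is built differently in each case, and the resulting bounds are then unified through $\floor{v/2}$.

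First I would dispose of the even case. Here $\tilde V=V'$ has even size $v$, so the optimal matching $M^v$ is a perfect matching on $V'$ with $w(M^v)=\text{OPT}(V')$ (nonnegativity of the weights lets us assume perfection). Since $v$ is uniform over the $v$ vertices and each of the $v/2$ edges of $M^v$ covers exactly two of them, the edge incident on $v$ is a uniformly chosen vertex's edge, giving $\E[w(e^v)\mid V']=\frac{2}{v}\text{OPT}(V')$.

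The odd case is the crux. Now $|V'|=v$ is odd, the algorithm deletes a uniformly random $v_r\in V'\setminus\{v\}$, and $M^v$ is a perfect matching on $\tilde V=V'\setminus\{v_r\}$ of size $v-1$. I would model $(v,v_r)$ as a uniformly random ordered pair of distinct vertices of $V'$: for each fixed deleted vertex $r$, $v$ is uniform over the remaining $v-1$ vertices, so averaging the incident-edge weight over $v$ gives $\frac{2}{v-1}w(M(r))$, where $M(r)$ is the optimal perfect matching on $V'\setminus\{r\}$; averaging over $r$ then yields $\E[w(e^v)\mid V']=\frac{2}{v(v-1)}\sum_{r\in V'}w(M(r))$. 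The key combinatorial step is the bound $\sum_{r\in V'}w(M(r))\geq (v-2)\,\text{OPT}(V')$. To prove it, let $M^*$ be an optimal matching on $V'$, which (as $v$ is odd) leaves exactly one vertex $u_0$ unmatched. For $r=u_0$, $M^*$ itself is a perfect matching of $V'\setminus\{u_0\}$, so $w(M(u_0))=\text{OPT}(V')$; for $r\neq u_0$, deleting the $M^*$-edge at $r$ and reconnecting its partner $m(r)$ to $u_0$ produces a perfect matching of $V'\setminus\{r\}$, so $w(M(r))\geq \text{OPT}(V')-w(r,m(r))$. Summing and using that each $M^*$-edge is subtracted exactly twice (both its endpoints differ from $u_0$) gives $\sum_r w(M(r))\geq v\,\text{OPT}(V')-2\,\text{OPT}(V')=(v-2)\text{OPT}(V')$, hence $\E[w(e^v)\mid V']\geq \frac{2(v-2)}{v(v-1)}\text{OPT}(V')$.

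Finally I would remove the conditioning. Since $V'$ is a uniformly random $v$-subset of $V$, each edge of the global optimum $M_{\textsc{OPT}}$ survives in $G[V']$ with probability $\frac{v(v-1)}{n(n-1)}$, and the surviving edges form a matching on $V'$, so $\E[\text{OPT}(V')]\geq \frac{v(v-1)}{n(n-1)}\textsc{OPT}$ (the analogue of the argument in Lemma~\ref{lem:exp-wt-bm1}, now for pairs rather than singletons). Plugging this in, the even bound becomes $\frac{2}{v}\cdot\frac{v(v-1)}{n(n-1)}\textsc{OPT}=\frac{2(v-1)}{n(n-1)}\textsc{OPT}$ and the odd bound becomes $\frac{2(v-2)}{v(v-1)}\cdot\frac{v(v-1)}{n(n-1)}\textsc{OPT}=\frac{2(v-2)}{n(n-1)}\textsc{OPT}$; since $4\floor{v/2}-2$ equals $2(v-1)$ for even $v$ and $2(v-2)$ for odd $v$, both cases coincide with the claimed $\frac{4\floor{v/2}-2}{n(n-1)}\textsc{OPT}$. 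I expect the swap argument establishing $\sum_r w(M(r))\geq (v-2)\text{OPT}(V')$ to be the only non-routine step, since the even case and the edge-survival computation are direct analogues of the bipartite analysis.
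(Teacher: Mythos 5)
Your proof is correct and follows the same route the paper takes: condition on the arrived set $V'$, observe that $e^v$ is the matching edge covering a uniformly random vertex of $\tilde V$ (so its conditional expected weight is $\frac{2}{|\tilde V|}$ times the optimal matching weight), and use that $V'$ is a uniform $v$-subset of $V$ to get $\E[w(M^v)] \geq \frac{v(v-1)}{n(n-1)}\textsc{OPT}$. The paper merely states these two observations and defers the actual computation to Theorem 3.1 of \cite{ezra2020secretary}; your writeup correctly supplies the details that the citation covers, including the only nontrivial step, the swap argument showing $\sum_{r}w(M(r))\geq (v-2)\,\mathrm{OPT}(V')$ for odd $v$.
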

\begin{proof}
This result  follows  from Theorem 3.1 of \cite{ezra2020secretary}. See that $V'$ is a uniformly randomly sampled subset of $V$ of size $v$ and therefore the expected weight of an edge with both end-points in $V'$ is equal to the average weight of edges with end-points in $V.$ Further, $e^v$ can be seen as uniformly randomly sampled from the edges in $M^v$.% due to the random arrival order of the vertices. %\textcolor{red}{Full in appendix-- Too much work, skip for now}.
\end{proof}

\begin{lemma} \label{lem:prob1-gm}
In \textsc{Alg3} for \textsc{GeneralMatching,} for vertices index $k_e+1$ to $k_s,$ the probability that edge $e^v$ computed in line $13$ can be added to the matching, i.e., the `if' condition of line $15$ is `true,' is at least $\frac{1}{3}(1+\frac{2(k_e-2)^3}{(v-1)^3}).$
\end{lemma}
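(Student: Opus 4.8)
The plan is to mirror the two-stage strategy of Lemma~\ref{lem:prob-bm1}: first fix the arrival position of the partner $l(e^v)$, and then bound the probability that this partner is still waiting (i.e.\ still in $A$) at the moment $v$ is processed.

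First I would argue that, conditioned on the set of the first $v$ arrivals, the partner $l(e^v)$ of the current vertex in $M^v$ sits at a position that is uniform on $\{1,\dots,v-1\}$. For even $v$ this is immediate, since $M^v$ is a perfect matching on the $v$ observed vertices and $v$ is the last, hence uniformly random, arrival; for odd $v$ the removal of the uniformly random $v_r$ in lines $9$--$10$ restores an even instance, and after averaging over $v_r$ the partner's position is again uniform on $\{1,\dots,v-1\}$. This is the analogue of the symmetry used in Lemmas~\ref{lem:exp-wt-bm1} and~\ref{lem:prob-bm1}.

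Next I would decompose the target event $\{l(e^v)\in A\}$ into (i) the partner entered $A$ at its own arrival step $t$, and (ii) it was never matched away during steps $t{+}1,\dots,v{-}1$. For (ii) I reuse the telescoping estimate of Lemma~\ref{lem:prob-bm1}: at each step $u$ the partner is removed only if $l(e^u)$ equals it, an event of probability $\tfrac{1}{u-1}$ by the same uniform-position argument, so the survival probability is at least $\prod_{u}(1-\tfrac{1}{u-1})$, which telescopes to $\tfrac{k_e-1}{v-2}$ for a partner that entered in the exploration phase ($t\le k_e$) and to $\tfrac{t-1}{v-2}$ for one from the selective-matching phase ($t>k_e$). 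For (i), an exploration vertex enters $A$ with probability $1$, while a selective-phase vertex at position $t$ enters $A$ exactly when it is itself not matched on arrival, which happens with probability $1-f(t)$, writing $f(t)$ for the success probability at step $t$. Averaging over the uniform position $t$ then yields a relation of the form $f(v)(v-1)(v-2)\ge k_e(k_e-1)+\sum_{t=k_e+1}^{v-1}(1-f(t))(t-1)$.

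Finally I would solve this relation. Passing to the continuous limit and differentiating turns it into the linear ODE $v f'(v)+3f(v)=1$, whose general solution is $f(v)=\tfrac13+c\,v^{-3}$; the boundary value $f(k_e)\to 1$ (at the start of the selective phase every earlier vertex is still waiting) fixes $c=\tfrac23 k_e^3$, giving exactly $\tfrac13\big(1+\tfrac{2k_e^3}{v^3}\big)$, which matches the claimed $\tfrac13\big(1+\tfrac{2(k_e-2)^3}{(v-1)^3}\big)$ once the $O(1)$ index shifts from the discrete telescoping are tracked. I expect the main obstacle to be the self-referential nature of this recursion: it expresses $f(v)$ through the terms $1-f(t)$, so a naive lower-bound induction has the wrong monotonicity. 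I plan to handle this by treating the relation as an asymptotic identity, so that the cubic is the forced solution, rather than as a one-sided monotone inequality, while checking that the only genuine inequalities (the telescoping survival bound and the $o(1)$ Riemann-sum errors) act in the favorable direction. A secondary subtlety, also to be addressed, is the mild correlation between the event that a vertex is $v$'s optimal partner and the event that it was matched earlier, together with the odd/even parity bookkeeping.
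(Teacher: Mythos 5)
There is a genuine gap at the step where your recursion is supposed to yield the stated bound. Note first that the paper does not prove this lemma from scratch: its proof is a direct citation of Lemma~3.2 of \cite{ezra2020secretary}, whose ``elaborate accounting'' is exactly the work your plan defers. Your derivation produces the one-sided relation $f(v)(v-1)(v-2)\ge k_e(k_e-1)+\sum_{t=k_e+1}^{v-1}(1-f(t))(t-1)$, in which the unknown success probabilities appear with a negative sign on the bounding side. Such an inequality cannot be closed by induction on lower bounds: a hypothesis $f(t)\ge L(t)$ only shrinks the right-hand side and therefore yields no conclusion about $f(v)$. The only a priori input available is $f(t)\le 1$, and substituting it gives merely $f(v)\ge k_e^2/v^2$ up to lower-order terms, which at $v=2k_e$ equals $1/4$ and falls short of the claimed $\tfrac13\big(1+2k_e^3/v^3\big)=5/12$. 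The proposed remedy --- treating the relation as an asymptotic identity so that the cubic is the ``forced'' solution --- is not available, because the relation is genuinely only an inequality: the telescoping survival factor $\prod_u\big(1-\tfrac{1}{u-1}\big)$ discards the event that $l(e^u)$ equals the partner while the partner has already left $A$, and you establish no matching upper bound. Without two-sided control, the cubic is only one of many functions consistent with your inequality ($f\equiv 1$ is another).

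The issue you call secondary is in fact equally central: replacing the probability that the \emph{specific} vertex $l(e^v)$ enters $A$ when it arrives at position $t$ by the position-averaged quantity $1-f(t)$ conflates a probability conditioned on being $v$'s optimal partner with an unconditional one; the partner of $v$ tends to be a vertex that participates in many earlier optimal matchings, so this correlation need not act in your favor. Resolving both problems is precisely what the cited Lemma~3.2 of \cite{ezra2020secretary} does, by tracking for each position and each step the probability that the vertex at that position is still unmatched, conditioned on the set of arrived vertices, with bounds in both directions. Your skeleton (uniform position of the partner, enter-and-survive decomposition, the ODE $vf'+3f=1$ with boundary value $f(k_e)=1$) correctly predicts the answer, but as written it is a heuristic rather than a proof; to make it rigorous you must either reproduce that two-sided accounting or, as the paper does, cite it.
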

\begin{proof}
This result follows directly from Lemma 3.2 of \cite{ezra2020secretary}. They do an elaborate accounting of the probability that a vertex is matched by the time that vertex $v$ arrives, conditioned on the set of vertices arrived so far. %Their result holds for both cases: when $v$ is even/odd.
%\textcolor{red}{Full in appendix -- Too much work, skip for now}.
\end{proof}

\begin{lemma} \label{lem:prob2-gm}
In \textsc{Alg3} for \textsc{GeneralMatching,} for vertices  $k_s+1$ to $n,$ the probability that edge $e^v$ computed in line $13$ can be added to the matching, i.e., the `if' condition of line $15$ is `true,' is at least $\frac{1}{3}(1+\frac{2(k_e-2)^3}{(v-1)^3})(\frac{n-v+1}{n-k_s}).$
\end{lemma}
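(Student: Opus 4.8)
The plan is to show that the phase-3 success probability is the phase-2 bound of Lemma~\ref{lem:prob1-gm} multiplied by a \emph{survival factor} that accounts for the depletion of the waiting set $A$ during forced matching. Lemma~\ref{lem:prob1-gm} already bounds the probability that the target vertex $l(e^v)$ has not been matched as a partner of a directly-matched arrival by $\frac{1}{3}\left(1+\frac{2(k_e-2)^3}{(v-1)^3}\right)$, via Lemma 3.2 of \cite{ezra2020secretary}; this quantity depends only on the arrival structure up to step $v$ and is indifferent to whether the keep-waiting option was available. The new phenomenon in phase 3 is that at \emph{every} step a vertex is irreversibly removed from $A$ — either because it is the optimal partner $l(e^u)$ of the arriving vertex, or because it is chosen in a forced matching — while no vertex is ever added to $A$. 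I would therefore treat the two effects separately and combine them multiplicatively.

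First I would record the structural invariant of phase 3: for $v>k_s$, when $v$ arrives we have $|A|=n+1-v$, and processing $v$ removes exactly one vertex from $A$ (the arriving vertex is always matched and never enters $A$), so $|A|$ drops to $n-v$. Consequently the $A$ present at step $v$ is a subset of the $n-k_s$ vertices waiting at the start of phase 3, of which exactly $v-1-k_s$ have since been removed. The target $w:=l(e^v)$ lies in $A$ at step $v$ iff (i) $w$ was still waiting at the start of phase 3 — the event bounded in Lemma~\ref{lem:prob1-gm} — and (ii) $w$ survived every removal in steps $k_s+1,\dots,v-1$.

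To control (ii) I would establish the per-step bound that, conditioned on $w\in A$ at the arrival of step $u$, it is removed during step $u$ with probability at most $\frac{1}{|A|}=\frac{1}{n+1-u}$, hence survives with probability at least $\frac{n-u}{n+1-u}$. For the forced-matching branch this is immediate, since a uniformly random element of $A$ is removed. Telescoping the survival probabilities would then give
\[
\prod_{u=k_s+1}^{v-1}\frac{n-u}{n+1-u}=\frac{n-v+1}{n-k_s},
\]
and multiplying this survival factor by the base probability of Lemma~\ref{lem:prob1-gm} yields the claim.

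The main obstacle is the direct-match branch within step (ii): the specific removed vertex $l(e^u)$ is the optimal partner of $u$ and could be correlated with the fixed target $w=l(e^v)$, which is also an optimal partner, so the clean $\frac{1}{|A|}$ removal bound is not \emph{a priori} justified there. I expect the cleanest route is to condition on the unordered set of arrived vertices (which pins down the matchings $M^u$ up to the parity/tie-break vertex) and invoke the symmetry of the uniformly random arrival order, so that, conditioned on $w$ still waiting, $w$ is exchangeable among the $n-k_s$ initially-waiting vertices. Then exactly $v-1-k_s$ of them are removed regardless of the pattern of direct versus forced matches, giving the survival fraction $\frac{(n-k_s)-(v-1-k_s)}{n-k_s}=\frac{n-v+1}{n-k_s}$ directly and bypassing the per-step correlation. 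Verifying that this conditioning leaves the Lemma~\ref{lem:prob1-gm} bound intact, so that the waiting-at-start probability and the survival fraction genuinely multiply, is the remaining technical check.
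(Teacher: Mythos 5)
Your proposal matches the paper's proof in essence: both lower-bound the availability of $l(e^v)$ by multiplying the Lemma~\ref{lem:prob1-gm} bound with the survival factor $\frac{n-v+1}{n-k_s}$ for removals from $A$ after step $k_s$ (the paper obtains this factor by noting that at most $v-1-k_s$ of the $n-k_s$ vertices waiting at step $k_s$ are picked by the uniform forced-matching process, which is exactly your telescoping product $\prod_{u=k_s+1}^{v-1}\frac{n-u}{n+1-u}$). The correlation issue you flag as the ``remaining technical check'' is present in the paper's argument too, which simply declares the direct-matching and forced-selection processes to be ``running independently'' and multiplies the two probabilities without further justification.
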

\begin{proof}
In the forced matching phase, consider two processes running independently. The first process is that of finding a vertex $l(e^v)$ for $v$, as computed in lines $11$ and $12$ of \textsc{Alg3} and starts at step $k_e+1$. The second process is selecting a vertex $v'$ from the set of waiting vertices $A$ uniformly at random and starts at step $k_s+1$. Vertex $v'$ is matched to $v$ if $l(e^v)$ is not available. %In each step of \textsc{Alg3} for $v \in \{k_s+1, \ldots,n\}$, a vertex can get matched by one of these two processes. 
%Let the neighbor of $v$ via $e^v$ in the first process be $l(e^v).$ 
We are interested in the probability of $l(e^v)$ being available when $v$ arrives. For a lower bound of this probability, it is sufficient to consider the case that $l(e^v)$ was not picked earlier by either of the two processes. The probability of it not being picked in the first process (i.e., in optimal matchings $M^u$ for $u <v$) is at least $\frac{1}{3}(1+\frac{2(k_e-2)^3}{(v-1)^3})$ by Lemma~\ref{lem:prob1-gm}. The probability of $l(e^v)$ not being picked in the second process (random selection) is at least $(1 - \frac{v-1-k_s}{n-k_s}),$ since there are at most $(v-1-k_s)$ vertices picked in this process out of $n-k_s$ vertices. This expression simplifies to $\frac{n-v+1}{n-k_s}.$ The result follows from multiplying these two probabilities.% of $l(e^v)$ not being picked in the first and second process respectively, $\frac{1}{3}(1+\frac{2(k_e-2)^3}{(v-1)^3})$ and  $\frac{n-v+1}{n-k_s}$.
\end{proof}

\begin{lemma} \label{lem:conc-eq}
For \textsc{Alg3}, the stopping point of the selective matching phase, $k_s,$ is at least $\left(\frac{12-\delta}{17}\right)n$ for any small positive constant $\delta$ w.h.p. as $n \rightarrow \infty.$ 
\end{lemma}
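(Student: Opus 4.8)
The plan is to track the size of the waiting set \(A\) throughout the selective matching phase and to show that, with high probability, it stays strictly below the moving threshold \(n+1-v\) until \(v\) reaches \(v^\ast := \tfrac{12-\delta}{17}n\). First I would reformulate the stopping condition in a convenient way. In the selective phase each arriving vertex \(v\) either \emph{succeeds} (its partner \(l(e^v)\) is waiting, so \(v\) is matched and one vertex leaves \(A\)) or \emph{fails} (\(v\) is appended to \(A\)). Writing \(N(v)\) for the number of successful selective matches among steps \(k_e+1,\dots,v\), a direct count gives \(|A| = v - 2N(v)\) after step \(v\): of the \(v\) vertices seen, exactly \(2N(v)\) are matched and the rest wait. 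Since the phase ends at the first \(v\) with \(|A| = n+1-v\), the claim \(k_s \ge v^\ast\) is equivalent to the event that \(a(v) := v - 2N(v) < n+1-v\) holds simultaneously for all \(v \le v^\ast\).

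Next I would set up a supermartingale for \(a(v)\). Its increment is \(+1\) on a failure and \(-1\) on a success, so conditionally on the history \(\mathcal F_{v-1}\) the drift is \(1-2p_v\), where \(p_v\) is the (conditional) success probability. Using Lemma~\ref{lem:prob1-gm} in the form \(p_v \ge \hat p_v := \tfrac13\bigl(1+\tfrac{2(k_e-2)^3}{(v-1)^3}\bigr)\), the drift is at most \(1-2\hat p_v\), a deterministic quantity. Hence, with \(g(v) := k_e + \sum_{u=k_e+1}^{v}(1-2\hat p_u)\), the sequence \(a(v)-g(v)\) is a supermartingale whose increments are bounded by \(2\) in absolute value. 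Treating the sum defining \(g\) as a Riemann sum and integrating (with \(\kappa := k_e/n \to 6/17\) and \(x := v/n\)) yields \(g(v) = \alpha(x)\,n - o(n)\), where \(\alpha(x) = \tfrac{x}{3} + \tfrac{2\kappa^3}{3x^2}\).

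The gap to the threshold is governed by \(h(x) := (1-x) - \alpha(x) = 1 - \tfrac{4x}{3} - \tfrac{2\kappa^3}{3x^2}\), so that \((n+1-v) - g(v) = h(x)\,n - o(n)\). One checks \(h'(x) = -\tfrac43 + \tfrac{4\kappa^3}{3x^3}\), which vanishes at \(x=\kappa\) and is negative for \(x>\kappa\); thus \(h\) is strictly decreasing on \([\kappa,\infty)\), with \(h(6/17) = 5/17 > 0\) and \(h(12/17) = 0\). Consequently \(h(x) \ge \gamma_\delta := h\bigl(\tfrac{12-\delta}{17}\bigr) > 0\) for all \(x \le v^\ast/n\), and for large \(n\) we get \((n+1-v) - g(v) \ge \tfrac12\gamma_\delta\,n\) uniformly over \(v \le v^\ast\). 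Applying Azuma's maximal inequality to the bounded-increment supermartingale \(a(v)-g(v)\) gives \(\P\bigl[\sup_{v \le v^\ast}(a(v)-g(v)) \ge t\bigr] \le \exp(-t^2/(8n))\); choosing \(t = \tfrac14\gamma_\delta\,n\) makes this \(e^{-\Omega(n)}\). On the complementary event, \(a(v) < g(v) + \tfrac14\gamma_\delta\,n \le n+1-v\) for every \(v \le v^\ast\), so the selective phase has not terminated and \(k_s \ge v^\ast\), as required.

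The step I expect to be the main obstacle is the concentration argument, and more precisely the justification that the per-step success-probability bound can be used as a \emph{conditional} drift bound. Lemma~\ref{lem:prob1-gm} is naturally a statement about the probability over the random arrival order, whereas the supermartingale above needs \(p_v \ge \hat p_v\) to hold conditionally on the realized filtration \(\mathcal F_{v-1}\) (which records the actual set of waiting vertices). If that conditional form is not immediate from the cited analysis, the fallback is to keep the marginal bound only to control \(\E[N(v)] \ge g(v)\) via linearity, and to obtain concentration from the Doob martingale of \(N(v)\) under a bounded-differences argument; verifying the \(O(1)\) sensitivity of \(N(v)\) to a single arrival is then the delicate point, since a change in one arrival can in principle alter the optimal matchings \(M^u\) at many subsequent steps. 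A secondary, routine matter is bounding the \(o(n)\) error from the integral approximation of \(\sum \hat p_u\); because the slack \(\gamma_\delta\) is a fixed positive constant, this error is negligible for large \(n\).
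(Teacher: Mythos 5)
Your bookkeeping is correct and matches the paper's: $|A| = v - 2N(v)$ after step $v$ is exactly the paper's observation that the phase ends precisely when the cumulative number of ``kept waiting'' events reaches $z = \tfrac{n-2k_e}{2} = \tfrac{5n}{34}$, and your function $h(x)$ vanishing at $x = 12/17$ reproduces the paper's threshold. The genuine gap is the one you yourself flag: your supermartingale argument needs $p_v \ge \hat p_v$ \emph{conditionally on} $\mathcal F_{v-1}$, but Lemma~\ref{lem:prob1-gm} only gives the marginal bound over the random arrival order. Without the conditional form, $a(v) - g(v)$ is not a supermartingale and Azuma does not apply; and your fallback (Doob martingale of $N(v)$ with bounded differences) founders exactly where you suspect, since swapping one arrival can change the optimal matchings $M^u$ at every subsequent step, so no $O(1)$ sensitivity bound is available. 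Neither route is completed, so as written the concentration step is missing.

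The paper closes this gap with a different device: it proves a negative-correlation property for the indicators $X_u$ of ``kept waiting,'' namely $\P[\prod_{u\in S}X_u = 1] \le \prod_{u\in S}\P(X_u=1)$ for every subset $S$, by arguing that conditioning on earlier vertices having been kept waiting only adds vertices to $A$ and hence cannot increase the probability that $l(e^v)$ is unavailable (the process is ``self-correcting''). This product bound lets it compare $\E[e^{aX}]$ term by term with $\E[e^{aY}]$ for independent Bernoullis $Y_u$ with the same marginals, after which the standard Chernoff computation goes through using only the marginal bound from Lemma~\ref{lem:prob1-gm}. It then needs only a single-time deviation bound ($X < 5n/34$ at $t = \tfrac{12-\delta}{17}n$) rather than your uniform-in-$v$ statement, since $X$ is monotone in $t$. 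If you want to rescue your approach, the cleanest fix is to replace the conditional drift claim by this negative-association argument; the rest of your calculus (the integral of $1-2\hat p_u$ and the positivity of $h$ up to $\tfrac{12-\delta}{17}$) is sound and equivalent to the paper's computation of $\E[X] \le \bigl(\tfrac{5}{34} - \tfrac{7\delta}{204}\bigr)n + o(n)$.
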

\begin{proof}
Denote the set of steps in the selective matching phase, i.e., $ \{k_e+1, \ldots, k_s\}$ by $V_{\texttt{SM}}.$ Further, denote the set of steps $ \{k_e+1, \ldots, t\}$ where $t \leq k_s$ by $V^t_{\texttt{SM}}.$ Denote the indicator random variable of the event that \textsc{Alg3} keeps vertex $v \in V_{\texttt{SM}}$ waiting when it arrives by $X_v.$ %Denote the probability of $X_v =1$ by $\P(v_{\texttt{wait}})$. % $X_v$ is $0$ otherwise.
 By Lemma~\ref{lem:prob1-gm},% we have:
\begin{align}
\P(X_v =1) \leq 1 - \frac{1}{3}(1+\frac{2(k_e-2)^3}{(v-1)^3}) = \frac{2}{3} ( 1 - \frac{(k_e-2)^3}{(v-1)^3} ). \nonumber
\end{align}
%Denote the number of vertices kept waiting when they arrive from step $k_e +1$ to $t$ by $x(t).$ By definition of $k_s$, the number of vertices waiting after step $k_s$ must be equal to the number of vertices yet to arrive. That is:
%\begin{align}
%k_e + x(k_s) - (k_s-k_e-x(t))  &= n-k_s \nonumber \\
%\implies x(k_s) = \frac{n - 2k_e}{2}	 &= \frac{5n}{34}.  \nonumber
%\end{align}
%This result indicates that to characterize $k_s,$ we must observe the sum of $\mathds{1}(v_{\texttt{wait}})$ for $v \in V^t_{\texttt{SM}}.$ 
Clearly, $X_v$ for $v \in V^t_{\texttt{SM}}$ are not independent random variables and therefore we cannot use a Chernoff bound on their sum. However, to obtain a useful concentration inequality, we prove the following property: for any subset $S \subseteq V_{\texttt{SM}},$
\begin{align}
\P \bigg[\big(\prod_{u \in S} X_u\big) = 1 \bigg] \leq \prod_{u \in S} \P(X_u =1). \label{eq:neq-rel-property}
\end{align}
To prove Equation~\eqref{eq:neq-rel-property}, we analyze how the random variables $X_u$ and $X_{u'}$ depend on each other for any pair of steps $(u,u')$. Without loss of generality, let $u' < u.$ See that $X_{u'}$ is independent of $X_{u}$ because it is observed before step $u.$ For dependence of $X_{u}$ on $X_{u'}$, consider the case that $X_{u'} =1.$ This event adds vertex $u'$ to the set of waiting vertices $A$. Since this event does not remove any vertex from $A$, it does not increase the probability that $l(e^u)$ is not in $A$. That is, $\P( X_u = 1|X_{u'} =1) \leq \P(X_u=1).$ Extending this argument to all vertices in $S$ that arrived before $u$, conditioned on the event $X_{u'} = 1$ for all $\{u' \in S| u' < u\}$, the probability that $u$ is kept waiting is no larger than the unconditional probability $\P(X_u =1).$ This proves that the property in Equation~\eqref{eq:neq-rel-property} holds. Intuitively, the process of vertices being kept waiting upon arrival is \emph{self-correcting} such that if too many vertices are kept waiting, then the probability of future vertices being kept waiting cannot increase.

Now we define $t$ independent Bernoulli random variables $y_u$ for $u \in V^t_{\texttt{SM}}$ such that $\P(y_u = 1)  = \P(X_u =1).$ Denote $X =\sum_{u \in V^t_{\texttt{SM}}} X_{u}$ and $Y = \sum_{u \in V^t_{\texttt{SM}}} y_u.$ Notice that $X$ and $Y$ are functions of $t;$ it is omitted from the notation for clarity. Clearly $\E(Y) = E(X).$ We will prove that for
any $a > 0,$
\begin{align}
\E[e^{aX}] \leq [e^{aY}]. \label{eq:exy}
\end{align}

Since  $e^{az}$ can be expanded as $\sum_{s = 0}^{+\infty} (a^s /s!) z^s,$ for any $z \in \mathbb{R},$ by the linearity of expectation, we have  $\E[e^{aX}] = \sum_{s = 0}^{+\infty} (a^s /s!) \E[X^s]$ and   $\E[e^{aY}] = \sum_{s = 0}^{+\infty} (a^s /s!) \E[Y^s].$ To prove eq.~\eqref{eq:exy}, it suffices to show that for every $s = 0,1,\ldots,\infty,$  $\E[X^s] \leq \E[Y^s].$ By the definition of $X,$ we have $X^s = \sum_\sigma \prod_{j=1}^s X_{\sigma(j)}$ where the summation is over all permutations $\sigma$ selecting $s$ items from $V^t_{\texttt{SM}}$ with replacement. By linearity of expectation, $\E[X^s] = \sum_\sigma \E[\prod_{j=1}^s X_{\sigma(j)}]$ and $\E[Y^s] = \sum_\sigma \E[\prod_{j=1}^s Y_{\sigma(j)}].$ To prove eq.~\eqref{eq:exy}, it now suffices to prove that for every permutation $\sigma,  \sum_\sigma \E[\prod_{j=1}^s X_{\sigma(j)}] \leq \E[\prod_{j=1}^s Y_{\sigma(j)}].$ Define $Q$ to be the image of $\sigma,$ that is $Q$ is the set of distinct elements $q$ such that $\sigma(j) = q$ for some $j.$ We have:
\begin{align}
&  \E\big[\prod_{j=1}^s X_{\sigma(j)}\big] = \P[X_u = 1 ~\forall ~u \in Q] \leq \prod_{u \in Q} \P(X_u =1)  \nonumber \\
&= \prod_{u \in Q} \P(Y_u =1) = \P[Y_u = 1~ \forall ~u \in Q] = \E\big[\prod_{j=1}^s Y_{\sigma(j)}\big].\nonumber  
\end{align}
The inequality follows from~\eqref{eq:neq-rel-property}. This proves~\eqref{eq:exy}. We apply Markov's inequality \cite{motwani1995randomized} to the non-negative random variable $e^{aX}$ for $a>0$ and $\varepsilon > 0$:
\begin{align}
&\P[X \geq (1+\varepsilon)\E[X]] =  \P[e^{aX} \geq e^{a(1+\varepsilon)\E[X]}] \nonumber\\
&\leq \E[e^{aX}]/e^{a(1+\varepsilon)\E[X]} \leq \E[e^{aY}]/e^{a(1+\varepsilon)\E[Y]}.  \label{eq:markov}
\end{align}
The last inequality is due to~\eqref{eq:exy} and the fact that $\E[X] = \E[Y].$ Now we use the standard steps for proving the Chernoff bound for $Y$. We use $a = \log (1+\varepsilon).$ Then,  
$$\frac{\E[e^{aY}]}{e^{a(1+\varepsilon)\E[Y]}} \leq \frac{e^{\E[Y](e^a-1)}}{e^{(1+\varepsilon)\E[Y]}} = \Big(\frac{e^\varepsilon}{(1+\varepsilon)^{1+\varepsilon}} \Big)^{\E[Y]} \!\leq e^{\frac{-E[Y]\varepsilon^2}{3}}.$$
Together with~\eqref{eq:markov}, and using $\E[X] = \E[Y]$, this implies:
\begin{align}
\P[X \geq (1+\varepsilon)\E[X]] \leq e^{\frac{-E[X]\varepsilon^2}{3}}. \label{eq:conc}
\end{align} 
We now use this concentration bound on $X$ to prove the result of the Lemma. %Recall the definition of $X$ as $\sum_{u \in V^t_{\texttt{SM}}} X_u$. Also, 
Denote $z$ as the number of steps in $V_{\texttt{SM}}$ in which the current vertex is kept waiting. By definition of $k_s,$ the number of vertices waiting after step $k_s$ is equal to the number of vertices yet to arrive.  That is: $k_e + z - (k_s-k_e-z)  = n-k_s\implies z = \frac{n - 2k_e}{2}= \frac{5n}{34}.$ This implies that: for $k_s$ to be $< \frac{12-\delta}{17}n,$ we must have  $X = 5n/34$ for $t < \frac{12-\delta}{17}n.$ To be able to use Equation~\eqref{eq:conc}, we compute $\E[X]$ for $t = \frac{12-\delta}{17}n.$ This is, 
$$ \E[X] = \sum_{u \in V^t_{\texttt{SM}}} \P(v_{\texttt{wait}}) = \sum_{u = \floor{\frac{6n}{17}}+1}^{\frac{12-\delta}{17}n} \bigg[\frac{2}{3} \bigg( 1 - \frac{(k_e-2)^3}{(u-1)^3} \bigg)\bigg].$$
Upon solving, we get $\E[X] \leq (\frac{5}{34} - \frac{7\delta}{204})n + o(n).$
Therefore, for $\varepsilon = \delta/10$, we have $(1+\varepsilon)\E[X] < 5n/34.$ This, together with Equation~\eqref{eq:conc}, implies that for any constant $\delta >0,$ we get $k_s > \frac{12-\delta}{17}n$ w.h.p. as $n \rightarrow \infty.$
\end{proof}

\begin{theorem}  \label{thm:gm}
\textsc{Alg3} is $3.34$-competitive for \textsc{GeneralMatching} w.h.p. as $n \rightarrow \infty$.  
\end{theorem}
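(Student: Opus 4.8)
The plan is to sum the expected contributions of the edges $e^v$ over the selective matching and forced matching phases, exactly as was done for \textsc{BipartiteMatching1} and \textsc{BipartiteMatching2}, and to show that the total is at least $\frac{1}{3.34}$. Concretely, I would write
\begin{align}
\frac{\E(w(M))}{\textsc{OPT}} \geq \sum_{v=k_e+1}^{n} \frac{\E(w(e^v))}{\textsc{OPT}} \, \P(v_{\texttt{success}}), \nonumber
\end{align}
and then substitute the three technical lemmas already proved. Lemma~\ref{lem:exp-wt-gm} gives $\frac{\E(w(e^v))}{\textsc{OPT}} \geq \frac{4\floor{v/2}-2}{n(n-1)}$, which for large $n$ I would approximate by $\frac{2v}{n^2}$. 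For the success probabilities I would use Lemma~\ref{lem:prob1-gm} on the range $v \in \{k_e+1,\ldots,k_s\}$ and Lemma~\ref{lem:prob2-gm} on the range $v \in \{k_s+1,\ldots,n\}$, plugging in $k_e = \floor{6n/17}$ and, crucially, the high-probability value $k_s = \frac{12}{17}n$ guaranteed by Lemma~\ref{lem:conc-eq}.

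The key computational step is to convert the two sums into integrals (viewing each as a Riemann sum with unit subintervals and bounding the summand in each subinterval, exactly as in Theorems~\ref{thm:bm1} and~\ref{thm:bm2}) and then perform the substitution $v = xn$ to obtain an $n$-free expression. With $\alpha = 6/17$ the lower endpoint and $\beta = 12/17$ the split point, the first integral becomes
\begin{align}
\int_{\alpha}^{\beta} \frac{2x}{3}\Big(1 + \frac{2\alpha^3}{x^3}\Big)\, dx, \nonumber
\end{align}
and the second, using $\frac{n-v+1}{n-k_s} \to \frac{1-x}{1-\beta}$, becomes
\begin{align}
\int_{\beta}^{1} \frac{2x}{3}\Big(1 + \frac{2\alpha^3}{x^3}\Big)\frac{1-x}{1-\beta}\, dx. \nonumber
\end{align}
I would then evaluate these two elementary integrals in closed form (each integrand is a sum of powers of $x$ after expanding), add them, verify numerically that the sum exceeds $0.2994 > \frac{1}{3.34}$, and confirm that the constants $\alpha = 6/17$, $\beta = 12/17$ are indeed (near-)optimal for this objective, which is presumably how the peculiar fraction $6/17$ was chosen.

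The main obstacle is handling the dependence on $k_s$ rigorously, since $k_s$ is a random stopping point rather than a fixed constant. Here Lemma~\ref{lem:conc-eq} does the heavy lifting: it guarantees $k_s \geq \frac{12-\delta}{17}n$ with high probability for any $\delta > 0$. I would therefore condition on this high-probability event and argue that the objective, as a function of the split point $\beta$, is continuous, so that taking $\delta \to 0$ recovers the bound computed at $\beta = 12/17$; on the complementary (vanishing-probability) event the contribution is nonnegative since all edge weights are nonnegative, so it can only help. A subtle point worth stating carefully is monotonicity: one must check that underestimating $k_s$ (replacing it by the smaller high-probability lower bound) does not inadvertently \emph{increase} the claimed bound — that is, that the integral expression is suitably monotone in $\beta$ near $12/17$ — so that the high-probability lower bound on $k_s$ translates into a valid lower bound on $\E(w(M))$. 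Once that is pinned down, the remaining steps are the routine integral evaluation and the absorption of the $o(1)$ and $\delta$-dependent terms as $n \to \infty$.
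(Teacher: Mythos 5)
Your proposal follows essentially the same route as the paper's proof: sum the per-step contributions using Lemmas~\ref{lem:exp-wt-gm}, \ref{lem:prob1-gm}, and~\ref{lem:prob2-gm}, convert to integrals, substitute $k_e=\floor{6n/17}$ and the high-probability lower bound $k_s\geq\frac{12-\delta}{17}n$ from Lemma~\ref{lem:conc-eq}, and evaluate; the monotonicity point you flag is exactly the one the paper handles by noting the second integrand is pointwise smaller than the first, so replacing $k_s$ by its lower bound only decreases the expression. The integrals evaluate to $\frac{60}{289}+0.0924\ldots\approx 0.30005>\frac{1}{3.34}$, matching the paper's constant.
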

\begin{proof}
We sum the expected contributions of the edges $e^v$ to the weight of the matching $M$ for all $ v \in \{ k_e+1, \ldots , n\}. $
From Lemmas~\ref{lem:exp-wt-gm},~\ref{lem:prob1-gm}, and~\ref{lem:prob2-gm}, we get:
\begin{align}
&  \frac{\E(w(M))}{\textsc{OPT}} \geq \sum_{v = k_e+1}^n  \frac{\E(w(e^v))}{\textsc{OPT}} \P(e^v \text{is added to matching}),\nonumber \\
&\geq \frac{1}{n(n-1)} \Bigg[ \sum_{v = k_e+1}^{k_s} \frac{4\floor{v/2}-2}{3}\Big(1+\frac{2(k_e-2)^3}{(v-1)^3}\Big) \nonumber \\ 
&+   \sum_{v = k_s+1}^{n}  \frac{4\floor{v/2}-2}{3}\Big(1+\frac{2(k_e-2)^3}{(v-1)^3}\Big) \Big(\frac{n-v+1}{n-k_s}\Big) \Bigg],\nonumber \\
&\geq \frac{1}{n^2} \Bigg[ \sum_{v = k_e+1}^{k_s} \frac{2v-4}{3}\Big(1+\frac{2(k_e-2)^3}{(v-1)^3}\Big) \nonumber \\ 
&+   \sum_{v = k_s+1}^{n}  \frac{2v-4}{3}\Big(1+\frac{2(k_e-2)^3}{(v-1)^3}\Big) \Big(\frac{n-v+1}{n-k_s}\Big) \Bigg],\nonumber \\
&\geq \frac{1}{n^2} \Bigg[\int_{k_e}^{k_s} \frac{2v-4}{3}\Big(1+\frac{2(k_e-2)^3}{(v-1)^3}\Big) dv  \nonumber \\
&+ \int_{k_s}^{n}  \frac{2v-4}{3}\Big(1+\frac{2(k_e-2)^3}{(v-1)^3}\Big) \Big(\frac{n-v+1}{n-k_s}\Big) dv \Bigg]. \nonumber 
\end{align}
Now we substitute the values of the parameters $k_e$ and $k_s.$ The value of $k_e$ is $\floor{\frac{6n}{17}}.$ For any $v,$ the value of the second integrand is smaller than that of the first integrand. Therefore, a lower bound of the expression is obtained by setting $k_s$ to its smallest value. We had observed in Lemma~\ref{lem:conc-eq} that $k_s$ is at least $\left(\frac{12-\delta}{17}\right)n$ for any small positive constant $\delta$ with high probability for $n \rightarrow \infty.$ We omit further calculations due to space constraints. On solving the integration, we get:
 $$  \frac{\E(w(M))}{\textsc{OPT}} \geq 0.30005 - \frac{10 \delta}{289} - o(1).  $$
By Lemma~\ref{lem:conc-eq}, the bound on $k_s$ holds for any positive constant $\delta.$ Using $\delta = 10^{-5},$ we get the factor $> \frac{1}{3.34}$.
\end{proof}

\section{Online Roommate Matching} \label{sec:rm}
Recall the online roommate matching problem given in Subsection~\ref{subsec:rm}. %The objective is to maximize the sum of the $n$ \emph{room valuations} and the $n$ \emph{mutual utlilities} of each roommate pair that is formed. 
If the mutual utilities of all pairs of persons are $0,$ then this problem is the same as \textsc{BipartiteMatching2} with rooms as offline vertices, persons as online vertices, and room valuations as edge-weights. Whereas, if all room valuations are $0$, then the problem is the same as \textsc{GeneralMatching} over persons as online vertices and mutual utilities as edge-weights. With these observations, we give \textsc{Alg4} for  \textsc{RoommateMatching} which considers only the room valuations with probability $p$ and only the mutual utilities with probability $1-p.$ The competitive ratio in our analysis is minimized at $p = 0.58.$
\begin{algorithm}[tb]
\caption{\textsc{Alg4} for \textsc{RoommateMatching}}
\label{alg:roommate}
%\textbf{Input}: $n,L$\\
%\textbf{Output}: A roommate allocation\\
\begin{algorithmic}[1] %[1] enables line numbers
\STATE Draw a random variable $r$ from Uniform$[0,1]$
\IF {$r \leq 0.58$}
\STATE Run \textsc{Alg2} on room valuations
\ELSE
\STATE Run \textsc{Alg3} on mutual utilities
\ENDIF
\end{algorithmic}
\end{algorithm}
\begin{theorem}  \label{thm:rm}
\textsc{Alg4} is $7.96$-competitive for \textsc{RoommateMatching} w.h.p. as $n \rightarrow \infty$.  
\end{theorem}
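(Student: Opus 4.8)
The plan is to split the offline optimum by \emph{type} of utility and show that each branch of \textsc{Alg4} recovers a constant fraction of the matching part it targets, while non-negativity of all valuations lets us simply discard the other part. Write $\textsc{OPT} = \textsc{OPT}_R + \textsc{OPT}_M$, where $\textsc{OPT}_R$ is the total room-valuation contribution of the optimal room allocation and $\textsc{OPT}_M$ is its total mutual-utility contribution. I would introduce two auxiliary single-type instances: the \emph{room-valuation-only} instance, which is a \textsc{BipartiteMatching2} instance with rooms as offline vertices of capacity $2$, persons as online vertices, and room valuations as edge weights (optimum $\textsc{OPT}^{room}$); and the \emph{mutual-utility-only} instance, which is a \textsc{GeneralMatching} instance on the persons with mutual utilities as edge weights (optimum $\textsc{OPT}^{mutual}$).

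The first step is to check $\textsc{OPT}^{room} \ge \textsc{OPT}_R$ and $\textsc{OPT}^{mutual} \ge \textsc{OPT}_M$, both of which are immediate. Reading the optimal room allocation only through its room assignment gives a feasible capacitated bipartite matching of room-valuation weight exactly $\textsc{OPT}_R$, and reading it only through which pairs of persons share a room gives a feasible perfect matching on persons of mutual-utility weight exactly $\textsc{OPT}_M$; since $\textsc{OPT}^{room}$ and $\textsc{OPT}^{mutual}$ are maxima over these feasible sets, the inequalities follow. The second step is the per-branch analysis. Conditioned on $r \le 0.58$, \textsc{Alg4} runs \textsc{Alg2} on room valuations and \emph{directly} outputs a valid roommate allocation (each room receives exactly $2$ persons); by the proof of Theorem~\ref{thm:bm2} its room-valuation value has expectation at least $0.2166\,\textsc{OPT}^{room} \ge 0.2166\,\textsc{OPT}_R$, and since the mutual utilities it incidentally realizes are non-negative, the total utility of the output is at least this room-valuation part. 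Conditioned on $r > 0.58$, \textsc{Alg4} runs \textsc{Alg3} on mutual utilities; interpreting ``keep $v$ waiting'' as seating $v$ alone in a fresh room and ``match $v$ to $l(e^v)$'' as seating $v$ in $l(e^v)$'s room yields a valid online room allocation using exactly $m$ rooms, whose mutual-utility value has expectation at least $0.30005\,\textsc{OPT}^{mutual} \ge 0.30005\,\textsc{OPT}_M$ w.h.p.\ by the proof of Theorem~\ref{thm:gm}, and again non-negativity of room valuations makes the total at least this mutual part. Averaging over $r$ with $p = 0.58$ gives
\begin{align}
\E[w(\textsc{Alg4})] \ge p \cdot 0.2166 \cdot \textsc{OPT}_R + (1-p)\cdot 0.30005 \cdot \textsc{OPT}_M. \nonumber
\end{align}

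The last step is a one-line balancing: $p = 0.58$ is chosen so that the two coefficients are equalized, since $p\cdot 0.2166 = (1-p)\cdot 0.30005$ solves to $p \approx 0.58$, giving $0.58\cdot 0.2166 \approx 0.12563$ and $0.42\cdot 0.30005 \approx 0.12602$, both at least $1/7.96$. Hence $\E[w(\textsc{Alg4})] \ge \tfrac{1}{7.96}\bigl(\textsc{OPT}_R + \textsc{OPT}_M\bigr) = \textsc{OPT}/7.96$, with the w.h.p.\ qualifier inherited verbatim from Theorem~\ref{thm:gm}. I expect the only genuine content beyond bookkeeping to be the decomposition-plus-non-negativity argument (ensuring that committing a run to one sub-algorithm never forfeits utility, because the untargeted type is simply dropped as a non-negative term) together with verifying that the ``waiting/match'' moves of \textsc{Alg3} translate into a legal no-rejection room assignment on exactly $m$ rooms; the optimization over $p$ and the two feasibility inequalities are routine.
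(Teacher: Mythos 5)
Your proposal is correct and follows essentially the same route as the paper: decompose $\textsc{OPT}$ into its room-valuation and mutual-utility parts, dominate each by the corresponding single-type optimum, invoke Theorems~\ref{thm:bm2} and~\ref{thm:gm} on the two branches, and balance with $p=0.58$. You spell out the feasibility of translating \textsc{Alg3}'s waiting/matching moves into a legal room assignment and the role of non-negativity more explicitly than the paper does, but the argument is the same.
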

\begin{proof}
Let $\textsc{OPT}$ be the social welfare of the optimal offline room allocation. Let $\textsc{OPT}_{RV}$ and $\textsc{OPT}_{MU}$ be the social welfare of the offline room allocations which maximize only the sum of room valuations and mutual utilities respectively.
 Then we have $ \textsc{OPT}_{RV} + \textsc{OPT}_{MU} \geq \textsc{OPT}.$  Let $U$ denote the expected social welfare of the room allocation given by \textsc{Alg4}. By Theorems~\ref{thm:bm2} and~\ref{thm:gm}, we have $U \geq 0.58 \cdot \frac{1}{4.62}  \cdot\textsc{OPT}_{RV} + 0.42 \cdot \frac{1}{3.34}  \cdot\textsc{OPT}_{MU} \geq  0.1257 \cdot ( \textsc{OPT}_{RV} + \textsc{OPT}_{MU}) \geq 0.1257 \cdot \textsc{OPT} > \frac{\textsc{OPT}}{7.96}.$ 
\end{proof}

\section{Conclusion}
In this paper we do the first detailed analysis of online matching problems with a no-rejection condition. We argue that this is a natural constraint in several resource allocation and resource sharing scenarios. We give constant factor approximation algorithms for capacitated bipartite matching, general matching, and roommate matching problems in the online no-rejection setting. The roommate matching problem captures scenarios where multiple persons may be assigned to use a single resource and there are positive externalities from the other persons using the same resource. %Persons derive utility from the resource and also from the identity of other persons assigned to the same resource.

For future work, an important theoretical direction is to find lower bounds of competitive ratios for these problems. Simple lower bounds follow from the corresponding problems without the no-rejection condition. This is $e (\approx 2.73)$ for \textsc{BipartiteMatching1} and $2.40$ for \textsc{GeneralMatching}. %, and $2.40$ for \textsc{RoommateMatching}. %The result for \textsc{RoommateMatching} follows from the maximum of the lower bounds for \textsc{BipartiteMatching2} and  \textsc{GeneralMatching}.  
%We give an optimal $2.40$-competitive algorithm for the \emph{ordinal} version of the online general matching problem in Appendix~B.%\ref{sec:appendix1}. Interestingly, this matches the $2.40$-competitive factor for the corresponding problem \textit{without} the no-rejection condition \cite{ezra2020secretary}. %This result implies a $2.40$-factor lower bound for the general matching and roommate matching problems.
Another interesting problem is to design truthful mechanisms for these online matching and resource allocation problems. In the proposed algorithms, an arriving vertex has the incentive to misreport its valuations (i.e., edge-weights) to potentially get a better match. For example, it is a dominant strategy for an arriving vertex to report zero valuations for the resources that are no longer available. 

\newpage
\section{Acknowledgement}
I am very grateful to Ashish Goel, Zhihao Jiang, and Anmol Kagrecha for their thoughtful comments on this work.
\bibliography{references}

\begin{thebibliography}{29}
\providecommand{\natexlab}[1]{#1}

\bibitem[{Babaioff, Immorlica, and Kleinberg(2007)}]{babaioff2007matroids}
Babaioff, M.; Immorlica, N.; and Kleinberg, R. 2007.
\newblock Matroids, secretary problems, and online mechanisms.
\newblock In \emph{Proceedings of the eighteenth annual ACM-SIAM symposium on
  Discrete algorithms}, 434--443.

\bibitem[{Bateni, Hajiaghayi, and Zadimoghaddam(2013)}]{bateni2013submodular}
Bateni, M.; Hajiaghayi, M.; and Zadimoghaddam, M. 2013.
\newblock Submodular secretary problem and extensions.
\newblock \emph{ACM Transactions on Algorithms (TALG)}, 9(4): 1--23.

\bibitem[{Bei and Zhang(2018)}]{bei2018algorithms}
Bei, X.; and Zhang, S. 2018.
\newblock Algorithms for trip-vehicle assignment in ride-sharing.
\newblock In \emph{AAAI Conference on Artificial Intelligence}.

\bibitem[{Chan et~al.(2016)Chan, Huang, Liu, Zhang, and
  Zhang}]{chan2016assignment}
Chan, P.; Huang, X.; Liu, Z.; Zhang, C.; and Zhang, S. 2016.
\newblock Assignment and pricing in roommate market.
\newblock In \emph{AAAI Conference on Artificial Intelligence}, volume~30.

\bibitem[{Dickerson et~al.(2018)Dickerson, Sankararaman, Srinivasan, and
  Xu}]{dickerson2018allocation}
Dickerson, J.; Sankararaman, K.; Srinivasan, A.; and Xu, P. 2018.
\newblock Allocation problems in ride-sharing platforms: Online matching with
  offline reusable resources.
\newblock In \emph{AAAI Conference on Artificial Intelligence}, volume~32.

\bibitem[{Dynkin(1963)}]{dynkin1963optimum}
Dynkin, E.~B. 1963.
\newblock The optimum choice of the instant for stopping a Markov process.
\newblock \emph{Soviet Mathematics}, 4: 627--629.

\bibitem[{Ezra et~al.(2020)Ezra, Feldman, Gravin, and Tang}]{ezra2020secretary}
Ezra, T.; Feldman, M.; Gravin, N.; and Tang, Z.~G. 2020.
\newblock Secretary Matching with General Arrivals.
\newblock \emph{arXiv preprint arXiv:2011.01559}.

\bibitem[{Ferguson(1989)}]{ferguson1989solved}
Ferguson, T.~S. 1989.
\newblock Who solved the secretary problem?
\newblock \emph{Statistical science}, 4(3): 282--289.

\bibitem[{Freeman(1983)}]{freeman1983secretary}
Freeman, P. 1983.
\newblock The secretary problem and its extensions: A review.
\newblock \emph{International Statistical Review/Revue Internationale de
  Statistique}, 189--206.

\bibitem[{Gamlath et~al.(2019)Gamlath, Kapralov, Maggiori, Svensson, and
  Wajc}]{gamlath2019online}
Gamlath, B.; Kapralov, M.; Maggiori, A.; Svensson, O.; and Wajc, D. 2019.
\newblock Online matching with general arrivals.
\newblock In \emph{IEEE 60th Annual Symposium on Foundations of Computer
  Science (FOCS)}, 26--37.

\bibitem[{Gharan and Vondr{\'a}k(2013)}]{gharan2013variants}
Gharan, S.~O.; and Vondr{\'a}k, J. 2013.
\newblock On variants of the matroid secretary problem.
\newblock \emph{Algorithmica}, 67(4): 472--497.

\bibitem[{Gilbert and Mosteller(2006)}]{gilbert2006recognizing}
Gilbert, J.~P.; and Mosteller, F. 2006.
\newblock Recognizing the maximum of a sequence.
\newblock In \emph{Selected Papers of Frederick Mosteller}, 355--398. Springer.

\bibitem[{Gnedin(1994)}]{gnedin1994solution}
Gnedin, A.~V. 1994.
\newblock A solution to the game of googol.
\newblock \emph{The Annals of Probability}, 1588--1595.

\bibitem[{Huzhang et~al.(2017)Huzhang, Huang, Zhang, and
  Bei}]{huzhang2017online}
Huzhang, G.; Huang, X.; Zhang, S.; and Bei, X. 2017.
\newblock Online Roommate Allocation Problem.
\newblock In \emph{IJCAI}, 235--241.

\bibitem[{Im and Wang(2011)}]{im2011secretary}
Im, S.; and Wang, Y. 2011.
\newblock Secretary problems: Laminar matroid and interval scheduling.
\newblock In \emph{Proceedings of the twenty-second annual ACM-SIAM Symposium
  on Discrete Algorithms (SODA)}, 1265--1274.

\bibitem[{Karp, Vazirani, and Vazirani(1990)}]{karp1990optimal}
Karp, R.~M.; Vazirani, U.~V.; and Vazirani, V.~V. 1990.
\newblock An optimal algorithm for on-line bipartite matching.
\newblock In \emph{Proceedings of the twenty-second annual ACM Symposium on
  Theory of Computing (STOC)}, 352--358.

\bibitem[{Kesselheim et~al.(2013)Kesselheim, Radke, T{\"o}nnis, and
  V{\"o}cking}]{kesselheim2013optimal}
Kesselheim, T.; Radke, K.; T{\"o}nnis, A.; and V{\"o}cking, B. 2013.
\newblock An optimal online algorithm for weighted bipartite matching and
  extensions to combinatorial auctions.
\newblock In \emph{European symposium on algorithms}, 589--600. Springer.

\bibitem[{Kleinberg(2005)}]{kleinberg2005multiple}
Kleinberg, R. 2005.
\newblock A multiple-choice secretary algorithm with applications to online
  auctions.
\newblock In \emph{Proceedings of the sixteenth annual ACM-SIAM Symposium on
  Discrete Algorithms (SODA)}, 630--631. Citeseer.

\bibitem[{Korula and P{\'a}l(2009)}]{korula2009algorithms}
Korula, N.; and P{\'a}l, M. 2009.
\newblock Algorithms for secretary problems on graphs and hypergraphs.
\newblock In \emph{International Colloquium on Automata, Languages, and
  Programming (ICALP)}, 508--520. Springer.

\bibitem[{Krengel and Sucheston(1977)}]{krengel1977semiamarts}
Krengel, U.; and Sucheston, L. 1977.
\newblock Semiamarts and finite values.
\newblock \emph{Bulletin of the American Mathematical Society}, 83(4):
  745--747.

\bibitem[{Lachish(2014)}]{lachish2014log}
Lachish, O. 2014.
\newblock O (log log rank) competitive ratio for the matroid secretary problem.
\newblock In \emph{IEEE 55th Annual Symposium on Foundations of Computer
  Science (FOCS)}, 326--335.

\bibitem[{Li and Li(2020)}]{li2020fair}
Li, B.; and Li, Y. 2020.
\newblock Fair resource sharing and dorm assignment.
\newblock In \emph{International Conference on Autonomous Agents and MultiAgent
  Systems (AAMAS)}, 708--716.

\bibitem[{Mehta(2012)}]{mehta2012online}
Mehta, A. 2012.
\newblock Online Matching and Ad Allocation.
\newblock \emph{Theoretical Computer Science}, 8(4): 265--368.

\bibitem[{Motwani and Raghavan(1995)}]{motwani1995randomized}
Motwani, R.; and Raghavan, P. 1995.
\newblock \emph{Randomized algorithms}.
\newblock Cambridge university press.

\bibitem[{Preater(1993)}]{preater1993senior}
Preater, J. 1993.
\newblock The senior and junior secretaries problem.
\newblock \emph{Operations research letters}, 14(4): 231--235.

\bibitem[{Preater(1994)}]{preater1994multiple}
Preater, J. 1994.
\newblock On multiple choice secretary problems.
\newblock \emph{Mathematics of Operations Research}, 19(3): 597--602.

\bibitem[{Reiffenhauser(2019)}]{reiffenhauser2019optimal}
Reiffenhauser, R. 2019.
\newblock An optimal truthful mechanism for the online weighted bipartite
  matching problem.
\newblock In \emph{Proceedings of the Thirtieth Annual ACM-SIAM Symposium on
  Discrete Algorithms (SODA)}, 1982--1993.

\bibitem[{Soto(2013)}]{soto2013matroid}
Soto, J.~A. 2013.
\newblock Matroid secretary problem in the random-assignment model.
\newblock \emph{SIAM Journal on Computing}, 42(1): 178--211.

\bibitem[{Soto, Turkieltaub, and Verdugo(2021)}]{soto2021strong}
Soto, J.~A.; Turkieltaub, A.; and Verdugo, V. 2021.
\newblock Strong algorithms for the ordinal matroid secretary problem.
\newblock \emph{Mathematics of Operations Research}.

\end{thebibliography}
%\end{document}
\newpage

\appendix
\section{Proof of Lemmas~\ref{lem:prob1-bm2} and~\ref{lem:prob2-bm2} for \textsc{BipartiteMatching2}} \label{appendix2}
%Recall Lemmas~\ref{lem:prob1-bm2} and~\ref{lem:prob2-bm2}. %The proof follows by tracking the cases in which

%Recall Lemma~\ref{lem:prob1-bm2}.
\begin{proof}[Proof of Lemma ~\ref{lem:prob1-bm2}]
Recall that we number the vertices in the order that they arrive and we use the integer variable $v$ to denote both the number of a step and the vertex that arrives in that step. Recall that $M^v$ is the optimal matching computed in step $v$ and $e^v$ is the edge incident on $v$ in $M^v.$ The algorithm outputs matching $M.$ Denote the probability that $e^v$ can be added to the matching $M$ by $\P(v_{\texttt{success}})$ and the neighbor of $v$ via $e^v$ by $l(e^v).$  We start with a weak lower  bound on $\P(v_{\texttt{success}})$ and then improve it. 

An important aspect of \textsc{Alg2} that differentiates it from \textsc{Alg1} is that it doesn't exhaust the full capacity of any offline vertex by random matchings until every offline vertex is matched at least once. In  \textsc{Alg2}, $L_a$ is the set of offline vertices available for random matchings. When an offline vertex is matched, \textsc{Alg2} removes it from $L_a$. When $L_a$ becomes empty, i.e., when all offline vertices are matched at least once, it is reset to be the set of all vertices with available capacity. We make the following observation:

\textbf{Observation 1:} The event of resetting $L_a$ (which coincides with it becoming empty) happens only after step $\floor{\frac{n}{2}}.$ 

This is true because $|L_a|=\floor{\frac{n}{2}}$ at the start of the algorithm and at most one element is removed from it in a step. In any step $v \in \{k+1, \ldots, \floor{\frac{n}{2}}\},$ for vertex $l(e^v)$ to be available, it is sufficient that one of the following two disjoint events happens:
\begin{itemize}
\item \textit{Event $0_v$:} $l(e^v)$ is not the same as $l(e^u)$ for any $u < v.$ 

Since $l(e^v)$ is picked in a random matching at most once before being removed from $L_a$, it is available to be matched to $v$ in this event.

\item  \textit{Event $1_v$:} $l(e^v)$ is the same as $l(e^u)$ for exactly one $u < v,$ \emph{and} it was not picked in a random matching before step $u.$

In this event, once $l(e^v)$ is matched to $u,$ it is removed from the set $L_a.$ Therefore it is available in step $v.$  
\end{itemize}

See that these two events are disjoint, and therefore $\P(v_{\texttt{success}}) \geq \P(\textit{Event $0_v$}) + \P(\textit{Event $1_v$}).$ We now find lower bounds on the probabilities of these two events separately. %, \textit{Event $0_v$} and \textit{Event $1_v.$} 
Note that for finding lower bounds of $\P(v_{\texttt{success}})$, we consider the worst-case situation where $l(e^v)$ is part of matching $M^u$ for \emph{all} $u <v.$ This is because in the secretary model of online matching, the egde-weights are arbitrary and it is possible that $l(e^v)$ has high edge weights with all online vertices. However, we are saved by the uniformly random arrival order, which limits the probability of the event that $l(e^v)$ is matched to vertex $u$ in matching $M^u.$ Out of the $u$ participating online vertices in $M^u,$ $l(e^v)$ can be matched to at most $2$ vertices since $M^u$ is a capacitated bipartite matching with capacity $2$ for all offline vertices. The probability that vertex $u$ is one of these $2$ vertices matched to $l(e^v)$ is $2/u$ because of the the uniformly random arrival order of the $u$ participating vertices. Further, this event is independent of the event that $u' <u$ was matched to $l(e^v)$ in $M^{u'}.$ This independence holds because in step $u,$ we do not condition on the order of arrival of vertices $\{1, \ldots, u-1\}.$ Therefore,  $\P(\textit{Event $0_v$})$ satisfies the following:

\begin{align}
\P(\text{Event $0_v$}) &=  \prod_{u = k+1}^{v-1} \P[l(e^u)  \neq  l(e^v)] \geq   \prod_{u = k+1}^{v-1} \bigg(1-\frac{2}{u} \bigg) \nonumber\\
 &= \frac{k(k-1)}{(v-1)(v-2)} > \frac{k^2}{v^2}.  \label{eq:event0-bm1}
\end{align}
 The last inequality holds since for $k = \floor{\frac{n}{4}}$ and $ k < v \leq \floor{\frac{n}{2}},$ we have $\frac{k-1}{v-2} \geq \frac{k}{v}$ and $\frac{1}{v-1} > \frac{1}{v}.$ 
 
Now we obtain a lower bound on $\P(\text{Event $1_v$}).$ Denote the probability that vertex $l(e^v)$ is not picked in a random matching till step $v-1$ by $\P(\text{NR}_v).$ One simple lower bound on $\P(\text{NR}_v)$ is given by: 
\begin{align}
\P(\text{NR}_v) \geq \frac{n/2 -(v-1)}{n/2} = \frac{n-2v+2}{n}. \label{eq:no-rm-weak}
\end{align}

Inequality~\eqref{eq:no-rm-weak} holds because there are at most $v-1$ random matchings before step $v$ and each picks a different offline vertex out of $n/2$ options. We use Inequality~\eqref{eq:no-rm-weak} to compute a lower bound on $\P(\textit{Event $1_v$})$ as follows:
\begin{align}
&\!\!\!\!\P(\text{Event $1_v$})  \nonumber\\
&\!\!\!\!\!\!=\sum_{t = k+1}^{v-1} \!\! \bigg(  \P(\text{NR}_t)~\P[l(e^t)  =  l(e^v)]\! \prod_{\substack{ u = k+1\\u\neq t }}^{v-1} \! \P[l(e^u)  \neq  l(e^v)]\bigg),  \nonumber\\
&\!\!\!\!\!\!\geq  \sum_{t = k+1}^{v-1} \!\! \bigg(  \P(\text{NR}_t)~\P[l(e^t)  =  l(e^v)]\! \prod_{ u = k+1}^{v-1} \! \P[l(e^u)  \neq  l(e^v)]\bigg),  \nonumber\\
\intertext{Using Equations~\eqref{eq:no-rm-weak} and~\eqref{eq:event0-bm1} and the probability of the event $l(e^t)  =  l(e^v)$, we get:}
&\P(\text{Event $1_v$}) \geq  \sum_{t = k+1}^{v-1} \!\! \bigg(   \frac{n-2t+2}{n} \cdot \frac{2}{t} \cdot \frac{k^2}{v^2}\bigg),  \nonumber\\
\intertext{Interpreting the summation as a Riemann sum and using the smallest value of the function in each interval, we get the following integral:}
&\geq \frac{2k^2}{nv^2} \int_{t = k}^{v-1} \!\! \bigg( \frac{n-2t}{t-1} \bigg) dt, \nonumber\\
&\geq  \frac{2k^2}{nv^2} \left(n \log\left(\frac{v-2}{k-1}\right) - 2(v-2-k+1) \right),  \nonumber\\
\intertext{Capturing the lower order terms in $o(1),$ we get:}
&\P(\text{Event $1_v$}) \geq  \frac{2k^2}{v^2} \Big( \log\big(\frac{v}{k}\big) - \frac{2(v-k)}{n}- o(1) \Big).\label{eq:event1-bm1}
\end{align}
Combining Equations~\eqref{eq:event0-bm1} and~\eqref{eq:event1-bm1}, we get the following lower bound on $\P(v_{\texttt{success}}):$
\begin{align}
\P(v_{\texttt{success}}) &\geq \P(\textit{Event $0_v$}) + \P(\textit{Event $1_v$}), \nonumber \\
&= \frac{k^2}{v^2} +  \frac{2k^2}{v^2} \Big( \log\big(\frac{v}{k}\big) - \frac{2(v-k)}{n}- o(1) \Big), \nonumber \\
&= \frac{k^2}{v^2} \left(1 + 2\log\big(\frac{v}{k}\big) - \frac{4(v-k)}{n}- o(1) \right), \nonumber \\
%&\geq \frac{k^2}{v^2} \left(1 + 2\log\big(\frac{v}{k}\big) - \frac{4(v-k)}{n}- o(1) \right), \nonumber \\
\intertext{Setting $k = \floor{\frac{n}{4}}$ and capturing the lower order terms in $o(1)$, we get:}
\P(v_{\texttt{success}}) &\geq \frac{n^2}{16v^2} \left(2+2\log\big(\frac{4v}{n}\big) - \frac{4v}{n}- o(1) \right), \nonumber \\
\intertext{We use a  function linear in $\frac{v}{n}$ as a lower bound of the above function and get the following:}
\P(v_{\texttt{success}}) &\geq \frac{2(n-2v)}{n} - o(1). \label{eq:weak-p-success}
\end{align}
The above inequality holds in the interval $k < v \leq n$ and is not obvious unless plotted and observed. For intuition, note that the two expressions are equal at $v = \frac{n}{4}$ and the former expression decays slower in $v$.

 We use the lower bound on $\P(v_{\texttt{success}})$ in Equation~\eqref{eq:weak-p-success} to find a better lower bound on $\P(\text{NR}_v).$ Note that there is a random matching in step $u>k$ with probability $1-\P(v_{\texttt{success}}).$ Therefore we have the following:
  \begin{align}
 &\P(\text{NR}_v) \nonumber\\ 
  &=\prod_{u = 1}^k \Big(1 - \frac{1}{\frac{n}{2}-(u-1)} \Big) \prod_{u=k+1}^{v-1} \Big(1 -\frac{1-\P(v_{\texttt{success}})}{\frac{n}{2}-(u-1)} \Big), \nonumber \\
  &\geq \prod_{u = 1}^k  \frac{\frac{n}{2}-u}{\frac{n}{2}-u+1}   \prod_{u=k+1}^{v-1} \frac{ \frac{n}{2}-u+1 -1+\P(v_{\texttt{success}})}{\frac{n}{2}-u+1}, \nonumber \\
  &= \prod_{u = 1}^k \frac{\frac{n}{2}-u}{\frac{n}{2}-u+1} \prod_{u=k+1}^{v-1} \frac{ (\frac{n}{2}-u) \Big[1 + \frac{\P(v_{\texttt{success}})}{(\frac{n}{2}-u)}\Big]}{\frac{n}{2}-u+1}, \nonumber \\
   &= \prod_{u = 1}^{v-1} \frac{\frac{n}{2}-u}{\frac{n}{2}-u+1}  \prod_{u=k+1}^{v-1} \left[1 + \frac{\P(v_{\texttt{success}})}{(\frac{n}{2}-u)}\right], \nonumber \\
   \intertext{Using the lower bound on $\P(v_{\texttt{success}})$ from Equation~\eqref{eq:weak-p-success},}
   &\geq \frac{\frac{n}{2}- (v-1)}{\frac{n}{2}} \prod_{u=k+1}^{v-1}  \left[1 +  \frac{4}{n}  -\frac{o(1)}{n} \right], \nonumber \\
&\geq \frac{n- 2v}{n}   \left[1 + \sum_{u=k+1}^{v-1} \frac{4}{n}\right], \nonumber \\
   &= \frac{n- 2v}{n} \left(1 + \frac{4}{n} \left( v-k-2 \right) -o(1) \right).\nonumber \\
   \intertext{Plugging in the value of $k$,}
   &\P(\text{NR}_v) = \frac{4v(n- 2v)}{n^2} - o\left(\frac{1}{n}\right). \label{eq:nr-stronger}
  \end{align} 
  Using the improved lower bound on $\P(\text{NR}_v)$ in Equation~\eqref{eq:nr-stronger}, we get a better lower bound on $\P(\text{Event $1_v$})$ than in Equation~\eqref{eq:event1-bm1}.
  \begin{align}
&\P(\text{Event $1_v$})  \nonumber\\
&\!\!=  \!\!\sum_{t = k+1}^{v-1} \!\! \bigg(  \P(\text{NR}_t)~\P[l(e^t)  =  l(e^v)]\! \prod_{\substack{ u = k+1\\u\neq t }}^{v-1} \! \P[l(e^u)  \neq  l(e^v)]\bigg),  \nonumber\\
&\!\!\geq \!\! \sum_{t = k+1}^{v-1} \!\! \bigg(  \P(\text{NR}_t)~\P[l(e^t)  =  l(e^v)]\! \prod_{ u = k+1}^{v-1} \! \P[l(e^u)  \neq  l(e^v)]\bigg),  \nonumber
\end{align}
\begin{align}
&\!\!\geq \!\! \sum_{t = k+1}^{v-1} \!\! \bigg( \left( \frac{4t(n- 2t)}{n^2} - o\left(\frac{1}{n}\right) \right)\cdot \frac{2}{t} \cdot \frac{k^2}{v^2}\bigg),  \nonumber \\
\intertext{Converting the Riemann summation into integration:}
&= \frac{8k^2}{n^2v^2} \int_{t = k}^{v-1} (n-2t) - o(n) ~dt , \nonumber\\
&= \frac{8k^2}{n^2v^2} \left( (v-k)n -(v^2 - k^2) - o(n^2)\right),  \nonumber\\
&=  \frac{8k^2}{n^2v^2} \left( (v-k)(n -v - k) - o(n^2)\right),  \nonumber\\
\intertext{Plugging in the value of $k$, we get:}
&\P(\text{Event $1_v$})  \geq  \frac{16nv-3n^2-16v^2 - o(n^2)}{32v^2}. \nonumber \\
\intertext{Recall Equation~\eqref{eq:event0-bm1},}
&\P(\text{Event $0_v$}) \geq \frac{k^2}{v^2} = \frac{n^2}{16v^2} -o(1). \nonumber \\
\intertext{Therefore,}
&\P(v_{\texttt{success}}) \geq \P(\textit{Event $0_v$}) + \P(\textit{Event $1_v$}) \nonumber \\
&\geq \frac{n^2}{16v^2} +  \frac{16nv-3n^2-16v^2}{32v^2} - o(1),\nonumber \\
&= \frac{n-v}{2v} -\frac{n^2}{32v^2}  - o(1).\nonumber
\end{align}
This completes the proof.
\end{proof}
Now we use the above result to give a proof of Lemma~\ref{lem:prob2-bm2}.
\begin{proof}[Proof of Lemma~\ref{lem:prob2-bm2}]
%This proof builds upon the proof of Lemma~\ref{lem:prob1-bm2}. 
For steps $v > \floor{\frac{n}{2}},$ offline vertex $l(e^v)$ is available in step $v$ if the following happens:
\begin{itemize}
\item $l(e^v)$ was not matched to full capacity in the first $\floor{\frac{n}{2}}$ steps, \textit{and}
\item $l(e^v) \neq l(e^u)$ for any step $u$ such that $\frac{n}{2} < u < v,$ \textit{and}
\item $l(e^v)$ is not picked in a random matching in any step $u$ for $\frac{n}{2} < u < v.$ 
\end{itemize}
The first event happens with probability $\P(\floor{\frac{n}{2}}_\texttt{success}) \geq  \frac{3}{8} - o(1).$ The second event, independent from the first, happens with probability given by:
\begin{align}
&\prod_{u = \frac{n}{2}+1}^{v-1} \P[l(e^u)  \neq  l(e^v)] \geq   \prod_{u = \frac{n}{2}+1}^{v-1} \bigg(1-\frac{2}{u} \bigg), \nonumber\\
 &= \frac{\frac{n}{2}(\frac{n}{2}-1)}{(v-1)(v-2)} > \frac{n^2}{v^2}.  
\end{align}
The third event happens with probability at least $\frac{2(n-v)}{n}.$ This is because an offline vertex that was already matched earlier is picked in a random matching only after $L_a$ is reset. Let $L_a$ be reset in step $t.$ By Observation 1, $t \geq \floor{\frac{n}{2}}$. Then the probability of the third event is at least $\frac{(n-t)-(v-t-1)}{n-t},$ which is minimized for $t = \floor{\frac{n}{2}}$. The statement of the lemma follows from the product of the probabilities of all three events: $\frac{3}{8} - o(1), \frac{n^2}{v^2},$ and $\frac{2(n-v)}{n}$ .
\end{proof}

\end{document}